\newcommand{\mfnote}[1]{{\color{blue}{#1}}}
\newcommand{\aknote}[1]{{\color{purple}{#1}}}
\newcommand{\argmax}{\operatorname{arg\,max}}
\newcommand{\prob}[2][]{\text{\bf Pr}\ifthenelse{\not\equal{}{#1}}{_{#1}}{}\!\left[#2\right]}
\newcommand{\expect}[2][]{\text{\bf E}\ifthenelse{\not\equal{}{#1}}{_{#1}}{}\!\left[#2\right]}
\def\Pr{\ensuremath{\mathrm{Pr}}}
\def\argmax{\ensuremath{\mathrm{argmax}}}
\newcommand{\rep}{{\tilde{s}}}
\newcommand{\Reals}{\mathbb{R}}
\newcommand{\Realsp}{\Reals^+}
\newenvironment{proofof}[1]{\begin{proof}[Proof of #1]}{\end{proof}}
\newcommand{\E}{\mathbb{E}}
\newtheorem{theorem}{Theorem}[section]
\newtheorem{lemma}[theorem]{Lemma}
\newtheorem{remark}[theorem]{Remark}
\newtheorem{proposition}[theorem]{Proposition}
\newtheorem{observation}[theorem]{Observation}
\newtheorem{definition}{Definition}[section]
\newtheorem{example}{Example}[section]
\theoremstyle{definition}
\newcommand*{\pr}[2][]{\text{Pr}\ifx\\\left[#1\right]\\\else_{#1}\fi \left[#2\right]}
\newcommand{\bzero}{\mathbf {0}}
\newcommand{\bs}{\mathbf s}
\begin{document}
\title{Combinatorial Auctions with Interdependent Valuations: \\SOS to the Rescue
\thanks{The work of A. Eden and M. Feldman was partially supported by the European Research Council under the
	European Unions Seventh Framework Programme (FP7/2007-2013) / ERC grant agreement number 337122, by
	the Israel Science Foundation (grant number 317/17), and by an Amazon research award. The work of A. Eden and A. Fiat was partially supported by	ISF 1841/14. The work of A. Karlin and K. Goldner was supported by NSF grants CCF-1420381 and CCF-1813135. The work of K.	Goldner was also supported by a Microsoft Research PhD Fellowship.
}
}
\author{Alon Eden%
\thanks{%
    {Tel Aviv University (\url{alonarden@gmail.com})}}
\and Michal Feldman%
\thanks{%
    {Tel Aviv University (\url{michal.feldman@cs.tau.ac.il})}}
\and Amos Fiat%
\thanks{%
    {Tel Aviv University (\url{fiat@tau.ac.il})}}
\and Kira Goldner%
\thanks{%
    {University of Washington (\url{kgoldner@cs.washington.edu})}}
\and Anna R. Karlin%
\thanks{%
	{University of Washington (\url{karlin@cs.washington.edu})}}
}

\maketitle

\begin{abstract}
We study combinatorial auctions with interdependent valuations.
In such settings, each agent $i$ has a private signal $s_i$ that captures her private information, and the valuation function of every agent depends on the entire signal profile, ${\bs}=(s_1,\ldots,s_n)$.
The literature in economics shows that the interdependent model gives rise to strong impossibility results, and identifies assumptions under which optimal solutions can be attained.
The computer science literature provides approximation results for simple single-parameter settings (mostly single item auctions, or matroid feasibility constraints).
Both bodies of literature focus largely on valuations satisfying a technical condition termed {\em single crossing} (or variants thereof).

We consider the class of {\em submodular over signals} (SOS) valuations (without imposing any single-crossing type assumption), and provide the first welfare approximation guarantees for multi-dimensional combinatorial auctions, achieved by universally ex-post IC-IR mechanisms.
Our main results are:
$(i)$ 4-approximation for any single-parameter downward-closed setting with single-dimensional signals and SOS valuations;
$(ii)$ 4-approximation for any combinatorial auction with multi-dimensional signals and {\em separable}-SOS valuations; and
$(iii)$ $(k+3)$- and $(2\log(k)+4)$-approximation for any combinatorial auction with single-dimensional signals, with $k$-sized signal space, for SOS and strong-SOS valuations, respectively. All of our results extend to a parameterized version of SOS, $d$-SOS, while losing a factor that depends on $d$.

\end{abstract}

\section{Introduction}

%
%

Maximizing social welfare with private valuations is a solved problem.
The classical Vickrey-Clarke-Grove (VCG) family of mechanisms~~\citep{vickrey1961counterspeculation,clarke,groves}, of which the Vickrey second-price auction is a special case, are dominant strategy  incentive-compatible and guarantee optimal social welfare in general social choice settings.

In this paper, we consider combinatorial auctions, where each agent has a value for every subset of items, 
and the goal is to maximize the social welfare, namely the sum of agent valuations for their assigned bundles.
As a special case of general social choice settings, the VCG mechanism solves this problem optimally, \emph{as long as the values are independent}.

There are many settings, however, in which the independence of values is not realistic.  If  the item being sold has money-making potential or is likely to be resold,
the values different agents have may be correlated, or perhaps even common.
A classic example is an auction for the right to drill for oil in a certain location~~\citep{wilson1969communications}. Importantly, in such settings, agents may have different information about what that value actually is. For example, the value of an oil lease depends on how much oil there actually is, and the different agents may have access to different assessments about this. Consequently, an agent might change her own estimate of the value of the oil lease given access to the information another agent has. Similarly, if an agent had access to the results of a house inspection performed by a different agent, that might change her own estimate of the value of a house that is for sale.

The following model due to \cite{milgrom1982theory}, described here for single-item auctions, has become standard for  auction design in such settings. These are known as {\em interdependent value settings} (IDV)~\footnote{ See also~~\citep{krishna2009auction,milgrom2004putting}.} and are defined as follows:

\begin{itemize}
\item Each agent $i$ has a real-valued, private {\em signal} $s_i$.
The set of signals $\bs = (s_1, s_2, \ldots, s_n)$ may be
drawn from a (possibly) correlated distribution.

The signals summarize the information available to the agents about the item. For example, when the item to be sold is a house, the signal could capture the results of an inspection and privately collected information about the school district. In the setting of oil drilling rights, the signals could be information that each companies' engineers have about the site based on geologic surveys, etc.

\item The {\em value of the item} to agent $i$ is
a function $v_i(\bs)$ of the signals (or information) of {\em all} agents.

A typical example
is when $v_i(\bs) = s_i + \beta \sum_{j\ne i}s_j$, for some $\beta \le 1$. This
type of valuation function captures settings where an agent's value depends both on how much he likes the item ($s_i$) and on the
resale value which is naturally estimated in terms of how much other agents like the item ($\sum_{j\ne i} s_j$)~~\citep{myerson1981optimal}.
\end{itemize}

In the economics literature, interdependent settings have been studied for about 50 years (with far too many papers to list; for an overview, see~~\citep{krishna2009auction}).
Within the theoretical computer science community,  interdependent (and correlated) settings have received less attention (see Section \ref{sec:related} for further discussion and references).

\subsection{Maximizing Social Welfare }

Consider the goal of maximizing social welfare in interdependent settings. Here, a direct revelation mechanism consists of each agent $i$ reporting a bid for their private signal $s_i$, and the auctioneer determining the allocation and payments. (It is assumed that the auctioneer knows the form of the valuation functions $v_i (\cdot)$.)


In interdependent settings, it is not possible\footnote{ Except perhaps in degenerate situations.} to design dominant-strategy incentive-compatible auctions,
since an agent's value depends on {\em all} of the signals, so if, say, agent $i$ misreports his signal, then agent $j$ might win at a price above her value if she reports truthfully. The next strongest equilibrium notion one could hope for is to maximize efficiency in ex-post equilibrium:  bidding truthfully is an {\em ex-post equilibrium} if an agent does not regret having bid truthfully, given that other agents bid truthfully.
In other words, bidding truthfully is a Nash equilibrium for every signal profile.\footnote{ Note that, of course, every ex-post equilibrium is a Bayes-Nash incentive compatible equilibrium, but not necessarily vice versa, and therefore ex-post equilibria are much more robust: they do not depend on knowledge of the priors and bidders need not think about how other bidders might be bidding.  This increases our confidence that an ex-post equilibrium is likely to be reached.}
A strong impossibility result due to \cite{jehiel2001efficient} shows that with {\em multi-dimensional} signals, maximizing welfare is generically impossible even in Bayes-Nash equilibrium.\footnote{For more details on this and other related work, see Section~\ref{sec:related}.}

For single-item auctions with single-dimensional signals, a characterization of ex-post incentive compatibility in the IDV setting is known, analogous to Myerson's characterization for the independent private values model (e.g.,~\citet{RTCoptimalrev}). The characterization says that there are payments that yield an  ex-post incentive-compatible mechanism if and only if the corresponding allocation rule is monotone in each agent's signal, when all other signals are held fixed.
 Maximizing efficiency in ex-post equilibrium is also provably impossible unless
the valuation functions $v_i(\bs)$ satisfy a technical condition known as the  {\em single-crossing condition}~~\citep{milgrom1982theory,Aspremont82,maskin1992,ausubel1999generalized,dasgupta2000efficient,athey01,bergemann2009information,CFK,che2015efficient,li2016approximation,RTCoptimalrev}.
I.e., the influence of agent $i$'s signal on his own value is at least as high as its influence on other agents' values, when
all other signals $\bs_{-i}$ are held fixed~\footnote{This implies that given signals $\bs_{-i}$,
if agent $i$ has the highest value when $s_i = s^*$, then agent $i$ continues to have the highest
value for $s_i > s^*$. This is precisely the monotonicity needed for ex-post incentive compatibility.}. When the single-crossing condition holds, there is a generalization of VCG that maximizes efficiency in ex-post equilibrium. (See ~\citep{CremerMcLean85,CremerMcLean88,krishna2009auction}.)

Unfortunately, the single crossing condition does not generally suffice to obtain optimal social welfare in
settings beyond that of a single item auction with single-dimensional signals. It is insufficient in fairly simple settings, such as two-item, two-bidder auctions with unit-demand valuations (see Section~\ref{sec:ud-lb}), or single-parameter settings with downward-closed feasibility constraints (see Section~\ref{sec:dc-single-param}).

Moreover, there are many relevant single-item settings where the single-crossing condition does not hold. For example, suppose that the signals indicate demand for a product being auctioned, agents represent firms, and one firm has a stronger signal about demand, but is in a weaker position to take advantage of that demand. A setting like this could yield valuations that do not satisfy
the single crossing condition. For a concrete example, consider the following scenario given by~\citep{maskin1992} and~\citep{dasgupta2000efficient}.

\begin{example}
Suppose that oil can be sold in the market at a price of 4 dollars per unit and two firms are competing for the right to drill for oil. Firm 1 has a fixed cost of 1 to produce oil and a marginal cost of 2 for each additional unit produced, whereas firm 2 has a fixed cost of 2 and a marginal cost of 1 for
each additional unit produced. In addition, suppose that firm 1 does a private test and discovers that the expected size of the oil reserve is $s_1$ units. Then
$v_1(s_1, s_2) = (4-2) s_1 -1 = 2s_1-1$, whereas $v_2(s_1, s_2) = (4-1) s_1 - 2 = 3s_1 -2$. These valuations don't satisfy the single-crossing condition since firm 1 needs to win when $s_1$ is low and lose when $s_1$ is high.
\end{example}

\subsection{Research Problems}

This paper addresses the following two issues related to social welfare maximization in the interdependent values model:

\begin{enumerate}
\item To what extent can the optimal social welfare be approximated in interdependent settings that do not satisfy the single-crossing condition?
\item How far beyond the single item, single-dimensional setting can we go?

Given the impossibility result of \cite{jehiel2001efficient}, we ask if it
is possible to {\em approximately} maximize social welfare in {\em combinatorial auctions with interdependent values}?

\end{enumerate}
The first question was recently considered by~\cite{eden18} who gave two examples pointing out the difficulty of approximating social welfare without single crossing.
Example \ref{ex:no-det} shows that even with two bidders and one signal, there are valuation functions for which no deterministic auction can achieve {\em any} bounded approximation ratio to optimal social welfare.

 \begin{example}[No bound for deterministic auctions \cite{eden18}] \label{ex:no-det}
	A single item is for sale. There are two players, $A$ and $B$, only $A$ has a signal $s_A\in \{0,1\}$. The valuations are
	\begin{eqnarray*}
		v_A(0) = 1 &\quad v_B(0)=0\\
		v_A(1) = 2 &\quad v_B(1)=H,
	\end{eqnarray*}
where $H$ is an arbitrary large  number.
If $A$ doesn't win when $s_A=0$, then the approximation ratio is infinite. On the other hand, if $A$ does win when $s_A=0$, then by monotonicity, $A$ must also win at $s_A=1$, yielding a $2/H$ fraction of the optimal social welfare.
\end{example}

The next example can be used to show that there are valuation functions for which no randomized auction performs better (in the worst case) than allocating to a random bidder ({\sl i.e.}, a factor $n$ approximation to social welfare), even if a prior over the signals is known.

\begin{example}[$n$ lower bound for randomized auctions \cite{eden18}] \label{ex:no-rand}
	There are $n$ bidders $1,\ldots, n$ that compete over a single item. For every agent $i$, $s_i\in \{0,1\}$, and \[v_i(\bs)=\prod_{j\neq i}s_j + \epsilon \cdot s_i\quad \mbox{for $\epsilon\rightarrow 0$};\]
	that is, agent $i$'s value is high if and only if all other agents' signals are high simultaneously.
	When all signals are 1, then in any feasible allocation, there must be an agent $i$ which is allocated with probability of at most $1/n$. By monotonicity, this means that the probability this agent is allocated when the signal profile is $\bs'=(\mathbf{1}_{-i},0_i)$ is at most $1/n$ as well. Therefore, the achieved welfare at signal profile $\mathbf{s'}$ is at most $1/n+(n-1)\cdot\epsilon$, while the optimal welfare is $1$, giving a factor $n$ gap \footnote{ \cite{eden18} show that there exists a prior for which the $n$ gap still holds, \textit{even} if the mechanism knows the prior.}.
\end{example}

Therefore, {\em some} assumption is needed if we are to get good approximations to social welfare. The approach taken by~\cite{eden18} was to define a relaxed notion of single-crossing that they called $c$-single crossing and then provide mechanisms that approximately maximize social welfare, where the approximation ratio depends on $c$ and $n$, the number of agents.

In this paper, we go in a different direction, starting with the observation that
in Example~\ref{ex:no-rand}, the valuations treat the signals as highly-complementary--one has a value bounded away from zero only if all other agent's signals are high simultaneously. This suggests that the case where the valuations treat the signals more like ``substitutes" might be easier to handle.

We capture this by focusing on {\emph submodular over signals (SOS) valuations}. This means that for every $i$ and $j$,
when signals $\bs_{-j}$ are lower, the sensitivity of the valuation $v_i (\bs)$  to changes in $s_j$ is higher.
Formally, we assume that for all $j$, for any $s_j$, $\delta \geq 0$, and for any  $\bs_{-j}$ and $\bs'_{-j}$  such that
component-wise $\bs_{-j} \le \bs'_{-j}$, it holds that
$$v_i (s_j + \delta, \bs_{-j}) - v_i (s_j, \bs_{-j}) \ge v_i (s_j + \delta, \bs'_{-j}) - v_i (s_j, \bs'_{-j}).$$

Many valuations considered in the literature on interdependent valuations are SOS (though this term is not used) \cite{milgrom1982theory,dasgupta2000efficient,klemperer1998auctions}.
The simplest (yet still rich) class of SOS valuations are
{\em fully separable} valuation functions~\footnote{ This type of valuation function is ubiquitous in the economics literature on inderdependent settings; often with the function
simply assumed to be a linear function of the signals (see, e.g., \cite{jehiel2001efficient,klemperer1998auctions}).},
where there are {\em arbitrary} (weakly increasing) functions $g_{ij} (s_j)$
for each pair of bidders $i$ and $j$ such that
$$v_i (\bs) = \sum_{j=1}^n g_{ij}(s_j).$$

A more general class of SOS valuation functions are functions
of the form $v_i(\bs) = f(\sum_{j=1}^n g_{ij}(s_j))$, where $f$ is a weakly increasing concave function.

We can now state the main question we study in this paper: \emph{to what extent can social welfare be approximated in interdependent settings with SOS valuations?}  Unfortunately, Example \ref{ex:no-det} itself describes SOS valuations, so no deterministic auction can achieve any bounded approximation ratio, even for this subclass of valuations. Thus, we must turn to randomized auctions.

\subsection{Our Results and Techniques}

All of our positive results concern the design of {\em randomized, prior-free, universally ex-post incentive-compatible (IC), individually rational (IR) mechanisms}. Prior-free means that the rules of the mechanism makes no use of the prior distribution over the signals, thus need not have any knowledge of the prior.

Our first result provides approximation guarantees for single-parameter downward-closed settings.
An important special case of this result is single-item auctions, which was the focus of~\cite{eden18}.

\vspace{0.1in}
\noindent {\bf Theorem \ref{thm:single-param-dc}} (See Section \ref{sec:single-param}): For every single-parameter downward-closed setting, if the valuation functions are SOS, then the \texttt{Random Sampling Vickrey} auction is a universally ex-post IC-IR mechanism that gives a 4-approximation to the optimal social welfare.
\vspace{0.1in}

Interestingly, no deterministic mechanism can give better than an $(n-1)$-approximation for arbitrary downward-closed settings, even if the valuations are single crossing, and this is tight. Recall that for a single item auction, or even multiple identical items, with single crossing valuations, the deterministic generalized Vickrey auction obtains the optimal welfare \cite{maskin1992,ausubel1999generalized}.

\vspace{0.1in}

We then turn to multi-dimensional settings. In the most general combinatorial auction model that we consider, each agent $i$ has a signal $s_{iT}$ for each subset $T$ of items, and a valuation function $v_{iT}: = v_{iT} (s_{1T}, s_{2T}, \ldots, s_{nT})$. For this setting, it is not at all clear under what conditions it might be possible to maximize social welfare in ex-post equilibrium.\footnote{ See the related work and also Lemmas \ref{lem:sclowertwodet} and \ref{lem:sclowertworan}, which show that under one natural generalization of single-crossing to the setting of two items and two agents that are unit demand, single crossing is not sufficient for full efficiency.}

However, rather surprisingly (see the related work section below), for the case of \textit{separable SOS} valuations\footnote{ A valuation is separable-SOS if the valuation for an agent can be split into two parts, an SOS function of all other signals and an arbitrary function of the agents' own signal.  Such valuations generalize the fully separable case discussed above. See definition~\ref{def:sepvaluation}}, we are able to extend the 4-approximation guarantee to combinatorial auctions.

\vspace{0.1in}
\noindent {\bf Theorem \ref{thm:rs-vcg}} (See Section \ref{sec:Combinatorial}): For every combinatorial auction, if the valuation functions are separable-SOS, then the {\texttt{Random Sampling VCG} auction} is a universally ex-post IC-IR mechanism that gives a 4-approximation to the optimal social welfare.
\vspace{0.1in}


Finally, we consider combinatorial auctions where each agent $i$  has a single-dimensional signal $s_i$, but where the valuation function $v_{iT}$ for each subset of items $T$ is an {\em arbitrary} SOS valuation function $v_{iT}(s_1, \ldots, s_n)$. For this case, we show the following:

\vspace{0.1in}
\noindent {\bf Theorems \ref{thm:k-sig-sm}  and \ref{thm:k-sig-ssm}} (See Sections \ref{sec:k-sig-sm} and \ref{sec:k-sig-ssm}): Consider combinatorial auctions with single-dimensional signals,
where each signal takes one of $k$ possible values.  If the valuation functions are SOS, then there exists a universally ex-post IC-IR mechanism that gives a $(k+3)$-approximation to the optimal social welfare. If the valuations are strong-SOS~\footnote{See definition \ref{def:ssm}.}, the approximation ratio improves to $O(\log k)$.
\vspace{0.1in}


All of the above results, as well as our lower bounds, are summarized in Table~\ref{tab:results}. In addition,
all of the results in this paper generalize easily, with a corresponding degradation in the approximation ratio, to the weaker requirement
of $d$-SOS valuations~\footnote{ A valuation function is $d$-SOS if for all $j$, for all $\delta > 0$, and for any  $\bs_{-j}$ and $\bs'_{-j}$  such that
component-wise $\bs_{-j} \le \bs'_{-j}$, it holds that
$d \cdot \left(v_i (s_j + \delta, \bs_{-j}) - v_i (s_j, \bs_{-j})\right) \ge v_i (s_j + \delta, \bs'_{-j}) - v_i (s_j, \bs'_{-j}).$}.

\subsubsection{Intuition for results}

\vspace{0.1in}
The fundamental tension in settings with interdependent valuations that is not present in the private values setting is the following. Consider, for example, a single item auction setting where agent 1's truthful report of her signal  increases agent 2's {\em value}. Since, this increases the chance that agent 2 wins and may decrease agent 1's chance of winning, it might motivate agent 1 to strategize and misreport.

Our approach is to simply {\em prevent} this interaction. Without looking at the signals, our mechanism randomly divides the agents into two sets\footnote{ as in  \citep{GHW01}.}: potential winners and certain losers. Losers never receive any allocation.  When estimating the value of a potentially winning agent $i$, we use only the signals of losers and $i$'s own signal(s). Thus, potential winners can not impact the estimated values and hence allocations of other potential winners.  This resolves the truthfulness issue.  The remaining question is: can we get sufficiently accurate estimates of the agents' values when we ignore so many signals?

 The key lemma ({\bf Lemma~\ref{lem:rs-value}} Section \ref{sec:key}) shows that we can do so, when the valuations are SOS.  Specifically, for any agent $i$, if all agents other than $i$ are split into two random sets $A$ (losers) and $B$ (potential winners), and the signals of agents in the random subset $B$ are ``zeroed out'', then the expected value agent $i$ has for the item is at least half of her true valuation. That is,
$$E_A [v_i (s_i, \bs_A, \bzero_B)] \ge \frac{1}{2}v_i (\bs).$$

Dealing with combinatorial settings is more involved as the truthfullness characterization is less obvious, but the key ideas of random partitioning and using the signals of certain losers remain at the core of our results.

\subsubsection{Additional remarks}

While this paper deals entirely with welfare maximization, our results have significance for the objective of maximizing the seller's revenue. \cite{eden18} give a reduction from revenue maximization to welfare maximization in single-item auctions with SOS valuations. Thus, the constant factor approximation mechanism presented in this paper implies a constant factor approximation to the optimal revenue in single-item auctions with SOS valuations. We note that this is the first revenue approximation result that does not assume any single-crossing type assumption (\citep{CFK,eden18,RTCoptimalrev,li2016approximation} require single crossing or approximate single crossing).

Finally, one can easily verify that, based on Yao's min-max theorem, the existence of a {\em randomized prior-free mechanism} that gives some approximation guarantee (in expectation over the coin flips of the mechanism) implies the existence of a {\em deterministic prior-dependent} mechanisms that gives the same approximation guarantee (in expectation over the signal profiles).

\begin{table}[]
\begin{tabular}{|l|l|}
\hline
\multicolumn{1}{|c|}{Setting}                                                                                                                   & \multicolumn{1}{c|}{Approximation Guarantees}                                                                 \\ \hline
\begin{tabular}[c]{@{}l@{}}Single Parameter SOS valuations \\ Downward Closed Feasibility\\ Single-Dimensional Signals\end{tabular}             & \begin{tabular}[c]{@{}l@{}}$\geq 1/4$\\ $\forall \mathrm{mech.\ } \leq 1/2$ \qquad \qquad (Section \ref{sec:single-param})\end{tabular}           \\ \hline
\begin{tabular}[c]{@{}l@{}}Arbitrary Combinatorial SOS valuations\\ Single-Dimensional Signals, $k$-sized Signal Space\end{tabular}         & \begin{tabular}[c]{@{}l@{}}$\geq 1/(k+3)$\\ $\forall \mathrm{mech.\ } \leq 1/2$ \qquad \qquad (Section \ref{sec:k-sig-sm})\end{tabular}       \\ \hline
\begin{tabular}[c]{@{}l@{}}Arbitrary Combinatorial, Strong-SOS Valuations\\ Single-Dimensional Signals, $k$-sized Signal Space\end{tabular} & \begin{tabular}[c]{@{}l@{}}$\geq 1/(\log(k)+2)$\\ $\forall \mathrm{mech.\ } \leq 1/2$ \qquad \qquad (Section \ref{sec:k-sig-ssm})\end{tabular} \\ \hline
\begin{tabular}[c]{@{}l@{}}Combinatorial, Separable-SOS Valuations\\ Multi-Dimensional Signals\end{tabular}                                 & \begin{tabular}[c]{@{}l@{}}$\geq 1/4$\\ $\forall \mathrm{mech.\ }\leq 1/2$ \qquad \qquad (Section \ref{sec:Combinatorial})\end{tabular}           \\ \hline
\end{tabular}
\caption{The table shows the approximation factors achievable for social welfare maximization with SOS and strong-SOS valuations.  Similar results hold for $d$-approximate SOS/Strong-SOS valuations, while losing a factor that depends on $d$. All positive results are obtained with universally ex-post IC-IR randomized mechanisms.}
\label{tab:results}
\end{table}

\subsection{More on Related Work}
\label{sec:related}

As discussed above, in single-parameter settings, there is an extensive literature on mechanism design with interdependent valuations that considers social welfare maximization, revenue maximization and other objectives. However, the vast majority of this literature assumes some kind of single-crossing condition and, in the context of social welfare, focuses on exact optimization.

There are two papers that we are aware of that study the question of how well optimal social welfare can be approximated in ex-post equilibrium without single-crossing. The first is the aforementioned paper \citep{eden18} on single item auctions with interdependent valuations. They defined a parameterized version of single-crossing, termed $c$-single crossing, where $c>1$ is a parameter that indicates how close is the valuation profile to satisfy single-crossing.
For $c$-single crossing valuations, they provide a number of results including a lower bound of $c$ on the approximation ratio achievable by any mechanism, a matching upper bound for binary signal spaces, and mechanisms that achieve approximation ratios of $(n-1)c$ and $2c^{3/2}\sqrt{n}$  (the first is deterministic and the second is randomized).

\cite{ItoP06} also consider approximating social welfare in the interdependent setting. Specifically, they propose a greedy contingent-bid auction (a la \citep{dasgupta2000efficient}) and show that it achieves a $\sqrt{m}$ approximation to the optimal social welfare for $m$ goods, in the special case of combinatorial auctions with single-minded bidders.

For multidimensional signals and settings, the landscape is sparser (and bleaker) and, to our knowledge, focuses on exact social welfare maximization. \cite{maskin1992} has observed that, in general, no efficient incentive-compatible single item auction exists if a buyer's valuation depends on a multi-dimensional signal.

\citet{jehiel2001efficient} 
consider a very general model in which there is a set $K$ of possible alternatives, and a multidimensional signal space, where each agent $j$ has a signal $s_{ki}^j$ for each outcome $k$ and other agent $j$.  In their model the valuation function of an agent $i$ for outcome $k$
is linear in the signals, that is, $v_i (k) := \sum_j a_{ki}^j s_{ki}^j$. Thus, their valuation functions are, in one sense, a special case of our separable valuation functions.
On the other hand, they are more general in that all quantities depend on the outcome $k$. Thus, there are allocation externalities. Their main result is that, generically, there is no Bayes-Nash incentive compatible mechanism that maximizes social welfare in this setting. However, they do give an ex-post IC mechanism that maximizes social welfare with both information and allocation externalities if
the signals are one-dimensional, the valuation functions are linear in the signals,
and a single-crossing type condition holds.

\cite{jehiel2006limits} go on to show that the only deterministic social choice functions that are ex-post implementable in generic mechanism design frameworks with multidimensional signals, interdependent valuations and transferable utilities, are constant functions.

Finally, \cite{Bikhchandani2006}  considers a single item setting with multidimensional signals but no allocation externalities and shows that there is a generalization of single-crossing that allows some social choice rules to be implemented ex-post.

For further analysis and discussion of implementation with interdependent valuations, see e.g., \cite{Bergemann05}
and \cite{McLean2015}.

For further literature in computer science on interdependent and correlated values, see
~\citep{ronen2001approximating,ConstantinIP07,ConstantinP07,KleinMPPSW08,PapadimitriouPierrakos10,DobzinskiFK11,BabaioffKL12,
abraham2011peaches,RobuPIJ13,kempe2013information,che2015efficient,li2016approximation,CFK}.

\section{Model and Definitions}

\subsection{Single Parameter Settings}

In Section \ref{sec:single-param}, we will consider single-parameter settings with interdependent valuations and downward-closed feasibility constraints.
In these settings, a mechanism decides which subset of agents  $1,\ldots,n$ are to receive ``service'' (e.g., an item). The feasibility constraint
is defined by a collection  $\mathcal{I}\subseteq 2^{[n]}$ of subsets of agents that may feasibly be served simultaneously. We restrict attention to {\em downward-closed settings}, which means that any subset of a feasible set is also feasible.  A simple example is a $k$-item auction, where $\mathcal{I}$ is the collection of all subsets of agents of size at most $k$.

For  these settings, we use the interdependent value model of~\cite{milgrom1982theory}:

\begin{definition}[Single Dimensional Signals, Single Parameter Valuations] \label{def:singlesingle}
Each agent  $j$ has a private signal $s_j\in \Realsp$.
The value agent $j$ gives to ``receiving service'' $v_j(\bs)\in \Realsp$, is a function of all agents' signals $\bs=(s_{1}, s_{2}, \ldots, s_{n})$. The function $v_j(\bs)$ is assumed to be weakly increasing in each coordinate and strictly increasing in $s_i$.
\end{definition}

\subsubsection{Deterministic Mechanisms}

\begin{definition}[Deterministic Single Parameter Mechanisms]
\label{def:SPmech}
A deterministic mechanism $M= (x,p)$ in the downward closed setting
is a mapping from reported signals $\bs = (s_1, \ldots, s_n)$
to allocations $x(\bs) = \{x_i (\bs)\} _{1\le i \le n}$ and payments
$p(\bs) = \{p_i(\bs)\}_{1\le i\le n}$, where $x_i (\bs) \in \{0,1\}$ indicates whether or not
agent $i$ receives service and $p_i(\bs)$ is the payment of agent $i$.
It is required that the set of
agents that receive service is feasible, i.e., $\{i ~|~ x_i(\bs)=1\} \in \mathcal{I}$.
(The mechanism designer knows the form of the valuation functions
but learns the private signals only when they are reported.)
 \end{definition}

\begin{definition}[Agent utility]
\label{SP:detutility}
Given a deteministic mechanism $(x,p)$, the {\em utility} of agent $i$ when her true signal is $s_i$, she reports $s_i'$ and the other agents report $\bs_{-i}$ is
$$u_i(s_i', \bs_{-i} | s_i) = x_i (s_i', \bs_{-i}) v_i ( s_i,  \bs_{-i}) - p_i (s_i', \bs_{-i}).$$
Agent $i$ will report $s_i'$ so as to maximize $u_i(s_i', \bs_{-i} | s_i)$.
We use $u_i (\bs)$ to denote the utility when she reports truthfully, i.e., $u_i(s_i, \bs_{-i} | s_i)$.
\end{definition}

\begin{definition}[Deterministic ex-post incentive compatibility (IC)]
\label{SP:detIC}
A deterministic mechanism $M = (x, p)$ in the interdependent setting is \emph{ex-post incentive compatible} (IC) if, irrespective of the true signals, and given that all other agents report their true signals, there is no advantage to an agent to report any signal other than her true signal. In other words, assuming that $\bs_{-i}$ are the true signals of other bidders,
$u_i(s_i', \bs_{-i} | s_i)$ is maximized by reporting $s_i$ truthfully.
\end{definition}

\begin{definition}[Deterministic ex-post  individual rationality (IR)]
\label{SP:detIR}
A deterministic mechanism in the interdependent setting is \emph{ex-post individually rational} (IR) if, irrespective of the true signals, and given that all other agents report their true signals, no agent gets negative utility by participating in the mechanism.
\end{definition}

If a deterministic mechanism is both ex-post IR and ex-post IR we say that it is ex-post IC-IR.

\begin{definition}
A deterministic allocation rule $x$ is monotone if for every agent $i$, every signal profile of all other agents $s_{-i}$, and every $s_i \leq s'_i$, it holds that
$x_i(s_i,s_{-i}) =1 \Rightarrow x_i(s'_i,s_{-i})=1$.
\end{definition}

\begin{proposition} \citep{RTCoptimalrev}\label{prop:monotone}
For every deterministic allocation rule $x$ for single parameter valuations, there exist payments $p$ such that the mechanism $(x,p)$ is ex-post IC-IR if and only if $x_i$ is monotone for every agent $i$.
\end{proposition}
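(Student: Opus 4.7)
The plan is to carry out the standard Myerson-style argument adapted to the interdependent values setting, exploiting the fact that $v_i(\bs)$ is strictly increasing in $s_i$ (even though payments must depend on other agents' reports through $v_i$).

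For the forward direction, I would assume $(x,p)$ is ex-post IC-IR and show monotonicity by contradiction. Fix agent $i$ and $\bs_{-i}$, and suppose there exist $s_i < s_i'$ with $x_i(s_i,\bs_{-i})=1$ and $x_i(s_i',\bs_{-i})=0$. Writing $p := p_i(s_i,\bs_{-i})$ and $p' := p_i(s_i',\bs_{-i})$, the ex-post IC constraint at true signal $s_i$ (comparing the truthful report to reporting $s_i'$) gives
\[
v_i(s_i,\bs_{-i}) - p \;\ge\; -p',
\]
while the ex-post IC constraint at true signal $s_i'$ (comparing truth to reporting $s_i$) gives
\[
-p' \;\ge\; v_i(s_i',\bs_{-i}) - p.
\]
Adding the inequalities yields $v_i(s_i,\bs_{-i}) \ge v_i(s_i',\bs_{-i})$, contradicting that $v_i$ is \emph{strictly} increasing in its own coordinate. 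Hence $x_i$ must be monotone.

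For the reverse direction, I would define the threshold $\tau_i(\bs_{-i}) := \inf\{t : x_i(t,\bs_{-i})=1\}$ (with the convention $\tau_i = \infty$ if this set is empty), and set the payment rule
\[
p_i(\bs) \;=\; \begin{cases} v_i(\tau_i(\bs_{-i}),\bs_{-i}) & \text{if } x_i(\bs)=1,\\ 0 & \text{otherwise.}\end{cases}
\]
IR is immediate: a losing agent pays $0$ and has utility $0$; a winning agent has utility $v_i(s_i,\bs_{-i}) - v_i(\tau_i(\bs_{-i}),\bs_{-i}) \ge 0$ since $s_i \ge \tau_i(\bs_{-i})$ and $v_i$ is weakly increasing in $s_i$. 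For ex-post IC, fix $\bs_{-i}$ and argue by cases on the true signal $s_i$. If $s_i \ge \tau_i(\bs_{-i})$, any truthful or higher report keeps the winning outcome and charges the same $v_i(\tau_i,\bs_{-i})$, while any deviation below $\tau_i$ yields utility $0 \le v_i(s_i,\bs_{-i}) - v_i(\tau_i,\bs_{-i})$. If $s_i < \tau_i(\bs_{-i})$, truthful reporting yields utility $0$, while any misreport $s_i' \ge \tau_i$ produces utility $v_i(s_i,\bs_{-i}) - v_i(\tau_i,\bs_{-i}) < 0$, again using that $v_i$ is strictly increasing in $s_i$ and $s_i < \tau_i$.

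The main subtlety, rather than a true obstacle, is the behavior at the threshold when the infimum is not attained: one must pick a tie-breaking convention consistent with the given $x$ and verify that $v_i(\tau_i(\bs_{-i}),\bs_{-i})$ is the correct payment on both sides of the boundary; continuity is not required because the strict monotonicity in $s_i$ gives strict separation of winning and losing values away from $\tau_i$, which is all the IC argument uses. This is exactly the Myerson-style critical-value payment, with the only twist being that the critical ``price'' is computed by plugging the threshold signal into the valuation function rather than being a numerical bid, which is precisely what makes the rule ex-post IC in the interdependent model.
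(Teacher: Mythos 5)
Your proof is correct, and the paper itself offers no proof of this proposition --- it simply cites \citep{RTCoptimalrev}. Your argument is the standard Myerson-style characterization adapted to the interdependent setting (threshold payments evaluated through the valuation function), which is exactly what the cited reference does, and your treatment of the boundary case where the infimum is not attained is sufficient: in that case a deviation to a winning report yields utility exactly $v_i(\tau_i,\bs_{-i}) - v_i(\tau_i,\bs_{-i}) = 0$, matching the truthful utility, so there is no strict gain from deviating.

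One small observation: in the second case of the IC argument ($s_i < \tau_i$), strict monotonicity in $s_i$ is not actually needed --- weak monotonicity already gives $v_i(s_i,\bs_{-i}) - v_i(\tau_i,\bs_{-i}) \le 0$, which suffices. Strict monotonicity is only genuinely used in the forward (necessity) direction, where it converts $v_i(s_i,\bs_{-i}) \ge v_i(s_i',\bs_{-i})$ into a contradiction. Without strict monotonicity in $s_i$, the ``only if'' direction would fail (a non-monotone $x$ could still be implementable if $v_i$ is flat in $s_i$ over the relevant range), so the strictness assumption in Definition~\ref{def:singlesingle} is doing real work in your proof, and you invoke it correctly.
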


\subsubsection{Randomized Mechanisms}

\begin{definition}
A randomized mechanism is a probability distribution over deterministic mechanisms.
\end{definition}

\begin{definition}[Universal ex-post IC-IR]
A randomized mechanism is said to be universally ex-post IC-IR if all deterministic mechanisms in the support are ex-post IC-IR.
\end{definition}

\subsection{Combinatorial Valuations with Interdependent Signals}

Sections \ref{sec:Combinatorial} and \ref{sec:comb-auction-single-dimension}  focus on combinatorial auctions, where there are $n$ agents and $m$ items. In these settings, a mechanism is used to
decide how the items are partitioned among the agents.  We consider two models for the interdependent valuations:~\footnote{ For other types of signals and interdependent valuation models, see, e.g.,~\cite{jehiel2001efficient}.}

\begin{definition}[Single Dimensional Signals, Combinatorial Valuations] \label{def:singlecomb}
 Each agent $i$ has a signal $s_i\in \Realsp$.
The value agent $i$ gives to subset of items $T\subseteq [m]$, which we denote by $v_{jT}(\bs)$, is a function of $\bs=(s_{1}, s_{2}, \ldots, s_{n})$.
\end{definition}

\begin{definition}[Multidimensional Combinatorial Signals, Combinatorial Valuations] \label{def:multicomb}
Here, each agent has a signal for each subset of items; for any agent $i$, we use $s_{iT}$ to denote agent $i$'s signal for  subset of items $T\subseteq [m]$. The value agent $i$ gives to set $T$ is denoted by $v_{iT}(\bs_T)$ where $\bs_T=(s_{1T}, s_{2T}, \ldots, s_{nT}) \in {\Realsp}^n$.
We use $\bs$ to denote the set of all signals $\{\bs_T\} _{T \subseteq 2^m}$.
\end{definition}
In both cases, each $v_{iT}(\cdot)$ is assumed to be a weakly increasing function of each signal and strictly increasing in $s_{i}$ (or $s_{iT}$ respectively), and known to the mechanism designer.

We give subsequent definitions only for multidimensional combinatorial signals, as single dimensional signals can be viewed as a special case of multi-dimensional signals where $s_{iT} = s_i$ for all $T$.

\subsubsection{Deterministic Mechanisms}

\begin{definition}[Deterministic mechanisms for combinatorial settings]
\label{def:SPmech}
A deteministic mechanism $M = (x,p)$ is a mapping from reported signals $\bs$ to allocations $x=\{x_{iT}\}$  (where each $x_{iT} \in \{0,1\}$) and payments $p=\{p_{iT}\}$ for all $1\leq i \leq n$ and $T\subset\{1,\ldots,m\}$ such that: \begin{itemize} \item Agent $j$ is allocated  the set $T$ iff $x_{jT}(\bs)=1$;
 \item For each agent $j$, there is at most one $T$ for which $x_{jT}(\bs)=1$;
 \item The sets allocated to different agents do not intersect.
 \item The payment for agent $j$ when her allocation is set $T$ is $p_{jT}(\bs)$.
 \end{itemize}
 \end{definition}

\begin{definition}[Agent Utility]
\label{SP:utility}
The {\em utility} of agent $i$ when her signals are $\bs_i = \{s_{iT}\}_{T \subset 2^m}$, she reports $\bs_i'$ and the other agents report $\bs_{-i}$ is
$$u_i(\bs_i', \bs_{-i} | \bs_i) = \sum_{T \subseteq 2^m} x_{iT} (\bs_i', \bs_{-i}) [v_{iT} ( \bs_{iT},  \bs_{-iT}) - p_{iT}(\bs'_{i}, \bs_{-i})].$$
Given a mechanism $M= (x,p)$, agent $i$ will report $\bs_i'$ so as to maximize $u_i(\bs_i', \bs_{-i} | \bs_i) $.
We use $u_i (\bs)$ to denote the utility when she reports truthfully, i.e., $u_i(\bs_i, \bs_{-i} | \bs_i)$.
\end{definition}

The definitions of \emph{ex-post incentive compatibility} (IC) and
\emph{ex-post individually rationality} (IR) for deterministic mechanisms for combinatorial settings are the same as the appropriate definitions for single parameter mechanisms
(Definitions~\ref{SP:detIC} and ~\ref{SP:detIR} with the obvious modifications).

\subsubsection{Randomized Mechanisms}

As with single parameter mechanisms, a randomized mechanism for a combinatorial setting is a probability distribution over deterministic mechanisms for the combinatorial setting, and a randomized mechanism is said to be \emph{universally ex post IC-IR} if all deterministic mechanisms in the support are themselves ex-post IC-IR.

\subsection{Submodularity over signals (SOS)} \label{sec:submodularity}
 \label{sec:SOS}

As discussed in the introduction, our results will rely on an assumption about the valuation functions that we call {\em submodularity over signals} or SOS.
The SOS (resp. strong-SOS) notion we use is the same as the weak diminishing returns (resp. strong diminishing returns) submodularity notion in~\citep{Bian17, Niazadeh18}\footnote{ Weak diminishing returns submodularity was introduced in \citep{Soma_2015}, where it's termed ``diminishing returns submodularity''. }. SOS was also used in \citep{eden18}, generalizing a similar notion in \citep{CFK}.

\begin{definition}[$d$-approximate submodular-over-signals valuations ($d$-SOS valuations)] \label{def:sm}
A valuation function $v (\bs)$ is a
 \emph{$d$-SOS valuation} if for all $j$, $s_{j}$, $\delta\geq 0$, $$\mathbf{\bs}_{-j}=(s_{1}, \ldots, s_{j-1},s_{j+1}, \ldots, s_{n})\quad\text{ and }\quad\mathbf{\bs}'_{-j}=(s'_{1}, \ldots, s'_{j-1},s'_{j+1}, \ldots, s'_{n})$$ such that $\mathbf{\bs}'_{-j}$ is smaller than or equal to $\mathbf{\bs}_{-j}$ coordinate-wise, it holds that
\begin{equation} d \cdot \left(v(\mathbf{\bs}'_{-j}, s_{j}+ \delta) - v(\mathbf{\bs}'_{-j},s_{j})\right) \geq v(\mathbf{\bs}_{-j}, s_{j}+ \delta) - v(\mathbf{\bs}_{-j},s_{j})\label{eq:sm}
\end{equation}
If $v$ satisfies this condition with $d=1$, we say that $v$ is an SOS valuation function.
\end{definition}

\begin{definition}[$d$-approximate strong submodular-over-signals valuations ($d$-strong-SOS valuations)] \label{def:ssm}
The valuation function $v(\bs)$ is a
 \emph{$d$ strong-SOS valuation} if for any  $j$, $\delta\geq 0$, $$\mathbf{\bs}=(s_{1},  \ldots, s_{n})\quad\text{ and }\quad\mathbf{\bs}=(s'_{1}, \ldots, s'_{n})$$ such that $\mathbf{\bs}'$ is smaller than or equal to $\mathbf{\bs}$ coordinate-wise, it holds that
\begin{equation} d \cdot \left(v(\mathbf{\bs}'_{-j}, s'_{j}+ \delta) - v(\mathbf{\bs}'_{-j},s'_{j})\right) \geq v(\mathbf{\bs}_{-j}, s_{j}+ \delta) - v(\mathbf{\bs}_{-j},s_{j})\label{eq:sm}
\end{equation}
If $v$ satisfies this condition with $d=1$, we say that $i$'s valuation functions are ``strong-SOS''.
\end{definition}

\begin{definition}[SOS-valuations settings]
We say that a mechanism design setting with interdependent valuations is an \emph{SOS-valuations setting} or, equivalently, that the agents have SOS-valuations, in each of the following cases:
\begin{itemize}
\item Single parameter valuations (as in definition \ref{def:singlesingle}): for every $i$, the valuation function $v_i(\bs)$ is SOS.
\item Combinatorial valuations with single-parameter signals (as in definition \ref{def:singlecomb}): for every $i$ and $T$, the valuation function $v_{iT} (\bs)$ is SOS;
\item Combinatorial valuations with multi-parameter signals (as in definition \ref{def:multicomb}): for every $i$ and $T$,
$v_{iT}(\bs_T)$ is SOS, where $ \bs_T = (s_{1T}, \ldots, s_{nT})$.
\end{itemize}
Similar definitions can be given for $d$-SOS valuation settings
and $d$-strong-SOS valuation settings.
\end{definition}

Finally, in section \ref{sec:Combinatorial}, we will specialize to the case of \emph{separable} SOS valuations.

\begin{definition}[Separable SOS valuations] \label{def:sepvaluation}
We say that a set of valuations as in Definition \ref{def:multicomb} are \emph{separable SOS valuations} if for every agent $i$ and subset $T$ of items, $v_{iT}(\bs_T)$ can be written
as $$v_{iT} (\bs_T) =  g_{-iT}( \bs_{-iT}) + h_{iT} (s_{iT}) ,$$
where $g_{-iT}(\cdot)$ and $h_{iT}(\cdot)$ are both weakly increasing
and $g_{-iT}(\bs_{-iT})$ is itself an SOS valuation function.
\end{definition}

\begin{observation}
  A separable SOS valuation function is itself an SOS valuation function.
\end{observation}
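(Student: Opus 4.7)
The plan is to verify the SOS inequality for $v_{iT}$ directly from its separable decomposition, splitting into two cases depending on which coordinate is being changed. Fix any agent $j \in [n]$, any $s_{jT}$ and $\delta \ge 0$, and any $\bs'_{-jT} \le \bs_{-jT}$ coordinate-wise; the goal is to show
$$v_{iT}(s_{jT}+\delta, \bs'_{-jT}) - v_{iT}(s_{jT}, \bs'_{-jT}) \ge v_{iT}(s_{jT}+\delta, \bs_{-jT}) - v_{iT}(s_{jT}, \bs_{-jT}).$$

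First I would handle the case $j = i$. Since $v_{iT}(\bs_T) = g_{-iT}(\bs_{-iT}) + h_{iT}(s_{iT})$ and $g_{-iT}$ does not depend on $s_{iT}$, both the left and right hand sides of the displayed inequality telescope to $h_{iT}(s_{iT}+\delta) - h_{iT}(s_{iT})$. So the SOS inequality holds with equality, regardless of the other coordinates.

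Next I would handle the case $j \neq i$. Now $s_{jT}$ is one of the arguments of $g_{-iT}$, while $h_{iT}(s_{iT})$ is unchanged when we vary $s_{jT}$, so the $h_{iT}$ terms cancel on each side. What remains on each side is precisely the discrete marginal of $g_{-iT}$ with respect to its $j$-th argument, evaluated at $\bs'_{-jT}$ versus at $\bs_{-jT}$. Since $\bs'_{-jT} \le \bs_{-jT}$ coordinate-wise (and this remains true after removing the $i$-th coordinate), the assumption that $g_{-iT}$ is itself SOS yields exactly the desired inequality.

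Combining the two cases gives the SOS condition for $v_{iT}$ at every coordinate $j$, so $v_{iT}$ is SOS. There is no real obstacle here; the only point requiring care is bookkeeping with the indices, namely recognizing that the ``remove agent $j$'' operation in the SOS condition for $v_{iT}$ restricts, when $j \neq i$, to the analogous ``remove agent $j$'' operation on the argument list of $g_{-iT}$, so that SOS for $g_{-iT}$ can be invoked verbatim.
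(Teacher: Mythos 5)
Your argument is correct, and it is the natural verification the paper leaves implicit by stating this as an unproved observation: for $j=i$ both sides of the SOS inequality reduce to $h_{iT}(s_{iT}+\delta)-h_{iT}(s_{iT})$ since $g_{-iT}$ does not depend on $s_{iT}$, and for $j\neq i$ the $h_{iT}$ terms cancel and the inequality is exactly the SOS condition for $g_{-iT}$ applied to the restricted signal vector. The one bookkeeping point you flag -- that dropping the $i$-th coordinate preserves the coordinate-wise ordering of $\bs'_{-jT}\le\bs_{-jT}$ -- is indeed the only thing to check, and it clearly holds.
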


We can similarly define separable $d$-SOS valuations.

\subsection{A useful fact about SOS valuations}

\begin{lemma}\label{lem:sm-sets}
	Let $v: {\Realsp}^{n} \rightarrow \Realsp$ be a $d$-SOS function. Let $A\subseteq [n]$ and $B=[n]\setminus A$. For any $\mathbf{s}_A, \mathbf{y}_A\in {\Realsp}^{|A|}$, and $\mathbf{s}_B, \mathbf{s'}_B\in {\Realsp}^{|B|}$ such that $\mathbf{s}_B$ is smaller than $\mathbf{s'}_B$ coordinate wise,
	\begin{eqnarray*} d \cdot \left(v(\mathbf{s}_{A}+\mathbf{y}_{A},\mathbf{s}_{B}) - v(\mathbf{s}_{A},\mathbf{s}_{B})\right) \geq v(\mathbf{s}_{A}+\mathbf{y}_{A},\mathbf{s'}_{B}) - v(\mathbf{s}_{A},\mathbf{s'}_{B}).
	\end{eqnarray*}
\end{lemma}
\begin{proof}
	Let $i_1, i_2, \ldots, i_{|A|}$ be the elements of $A$. For $1\leq j\leq |A|$, let $\mathbf{s}^j$ and $\mathbf{s'}^j$ denote the vectors
	\begin{eqnarray*}
		\mathbf{s}^j & = & \left(({s}_{i_1}+{y}_{i_1}),  \ldots, ({s}_{i_j}+{y}_{i_j}), s_{i_{j+1}},\ldots, s_{i_{|A|}},\mathbf{s}_{B}\right),\\
		\mathbf{s'}^j & = & \left(({s}_{i_1}+{y}_{i_1}),  \ldots, ({s}_{i_j}+{y}_{i_j}), s_{i_{j+1}},\ldots, s_{i_{|A|}},\mathbf{s'}_{B}\right).
	\end{eqnarray*}
Note that $\mathbf{s}^{|A|}=(\mathbf{s}_{A}+\mathbf{y}_{A},\mathbf{s}_{B})$, and $\mathbf{s'}^{|A|}=(\mathbf{s}_{A}+\mathbf{y}_{A},\mathbf{s'}_{B})$.
	
	It follows from the $d$-SOS definition that for every $1\leq j\leq |A|$,
	\begin{eqnarray}
		d\cdot\left( v(\mathbf{s}^j)- v(\mathbf{s}^{j-1}) \right) \geq v(\mathbf{s'}^j)- v(\mathbf{s'}^{j-1}), \label{eq:step}
	\end{eqnarray}
	where $\mathbf{s}^0= (\mathbf{s}_{A},\mathbf{s}_{B})$ and $\mathbf{s'}^0= (\mathbf{s}_{A},\mathbf{s'}_{B})$.
	
	Summing Equation~\eqref{eq:step} for $j=1,2,\ldots, |A|$ proves the claim.
\end{proof}

\section{The Key Lemma}\label{sec:key}

The following is a key lemma which is used for both single parameter and combinatorial settings.

\begin{lemma}\label{lem:rs-value}
Let $v_i: {\Realsp}^{n} \rightarrow \Realsp$ be a $d$-SOS function.
Let $A$ be a uniformly random subset of $[n] \setminus \{i\}$, and let $B := \left([n]\setminus \{i\}\right)\setminus A $.
It now holds that
$$\E_A\left[v_i(\mathbf{s}_A, \mathbf{0}_B, s_i)\right] \geq \frac{1}{d+1}v_i(\mathbf{s}),$$ where the expectation is over the random choice of $A$.
\end{lemma}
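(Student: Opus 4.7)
The plan is to combine a single application of Lemma~\ref{lem:sm-sets} with a symmetry argument over the random partition. Fix the signal $s_i$ of agent $i$ throughout, and abbreviate $f(A) := v_i(\mathbf{s}_A, \mathbf{0}_B, s_i)$ for any subset $A \subseteq [n]\setminus\{i\}$ with $B := ([n]\setminus\{i\})\setminus A$. The goal is to lower bound $\E_A[f(A)]$ by $\frac{1}{d+1}v_i(\mathbf{s})$.

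First I would apply Lemma~\ref{lem:sm-sets} to the $d$-SOS function $v_i$ for an arbitrary fixed partition $(A,B)$. Specifically, I would take the ``low'' configuration on the $A$-coordinates to be $\mathbf{0}_A$ and the ``increment'' to be $\mathbf{s}_A$; I would take the two comparison configurations on the $B$-coordinates to be $\mathbf{0}_B \le \mathbf{s}_B$. This yields the pointwise inequality
\begin{equation*}
d\bigl(v_i(\mathbf{s}_A,\mathbf{0}_B,s_i) - v_i(\mathbf{0}_A,\mathbf{0}_B,s_i)\bigr) \;\ge\; v_i(\mathbf{s}_A,\mathbf{s}_B,s_i) - v_i(\mathbf{0}_A,\mathbf{s}_B,s_i),
\end{equation*}
which, after dropping the nonnegative term $d\cdot v_i(\mathbf{0}_A,\mathbf{0}_B,s_i)$ on the right side (and noting $v_i(\mathbf{s}_A,\mathbf{s}_B,s_i) = v_i(\mathbf{s})$), rearranges to
\begin{equation*}
d\cdot v_i(\mathbf{s}_A,\mathbf{0}_B,s_i) + v_i(\mathbf{0}_A,\mathbf{s}_B,s_i) \;\ge\; v_i(\mathbf{s}).
\end{equation*}

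Next, I would take expectations over the uniformly random subset $A$. The crucial observation is that if $A$ is uniformly distributed over subsets of $[n]\setminus\{i\}$, then so is its complement $B$; hence
\begin{equation*}
\E_A\bigl[v_i(\mathbf{0}_A,\mathbf{s}_B,s_i)\bigr] \;=\; \E_A\bigl[v_i(\mathbf{s}_A,\mathbf{0}_B,s_i)\bigr] \;=\; \E_A[f(A)].
\end{equation*}
Taking expectations of the pointwise inequality and using this symmetry gives $(d+1)\,\E_A[f(A)] \ge v_i(\mathbf{s})$, which is exactly the bound claimed.

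The main obstacle is conceptual rather than computational: one needs to spot the right invocation of Lemma~\ref{lem:sm-sets} (with zero ``low'' vector on the $A$-side and $\mathbf{0}_B$ versus $\mathbf{s}_B$ as the two comparison vectors) and to notice that the two ``opposite'' zeroings contribute equally in expectation under the uniform distribution on $A$. Once these two observations are in place, the proof reduces to a single inequality plus averaging, with no further calculation.
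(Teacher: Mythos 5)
Your proof is correct and takes essentially the same approach as the paper's: both invoke Lemma~\ref{lem:sm-sets} with the $B$-coordinates zeroed out and then average over the random partition by pairing each $A$ with its complement. Your presentation is slightly cleaner — you use one application of Lemma~\ref{lem:sm-sets} and push the complement-pairing into the symmetry $\E_A[v_i(\mathbf{0}_A,\mathbf{s}_B,s_i)] = \E_A[v_i(\mathbf{s}_A,\mathbf{0}_B,s_i)]$, whereas the paper applies the lemma twice pointwise to get $d\alpha+\beta\ge 1$ and $\alpha+d\beta\ge 1$, derives $\alpha+\beta\ge \frac{2}{d+1}$ via a max argument, and then averages.
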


\begin{proof}
  We consider two equiprobable events,
  \begin{itemize}
    \item $A= S \subset [n]\setminus \{i\}$ is chosen as the random subset.
    \item $A = T = \left([n] \setminus \{i\}\right) \setminus S$ is chosen as the random subset.
  \end{itemize}
Normalize the valuations so that $v_i(\mathbf{s})= 1$ and define $\alpha, \beta \in [0,1]$ such that  \begin{eqnarray*}
v_i(\mathbf{s}_S,\mathbf{0}_T, s_i)\ =\ \alpha, \quad\quad
v_i(\mathbf{0}_S,\mathbf{s}_T,s_i)\ =\ \beta.\end{eqnarray*}
It follows that
\begin{eqnarray*}
\beta = v_i(\mathbf{0}_S,\mathbf{s}_T, s_i) &\geq & v_i(\mathbf{0}_S,\mathbf{s}_T, s_i) - v_i(\mathbf{0}_S,\mathbf{0}_T, s_i)  \\
&\geq & (v_i(\mathbf{s}_S,\mathbf{s}_T, s_i)-v_i(\mathbf{s}_S,\mathbf{0}_T, s_i))/d\\
& = & (1-\alpha)/d,
\end{eqnarray*}
where the first inequality follows from non-negativity of $v_i(\mathbf{0}_S,\mathbf{0}_T, s_i)$, and the second inequality follows from $v_i$ being $d$-SOS and Lemma~\ref{lem:sm-sets}.

Similarly, we have that
\begin{eqnarray*}
  \alpha &\geq& (1-\beta)/d \quad \Rightarrow \quad \beta \geq 1-\alpha d;
\end{eqnarray*}


It follows that  $$\alpha + \beta \geq \max \left( \alpha+\frac{1-\alpha}{d}, \alpha + 1 - \alpha d\right).$$
Solving for equality of the two terms, we get that $\alpha = 1/(d+1)$ which implies that
$$\alpha + \beta \geq \frac{2}{d+1}.$$

Partition the event space into pairs $(S,T)$ that partition $[n]\setminus \{i\}$.
For every such $(S,T)$ pair, it follows that $v_i(\mathbf{s}_S,\mathbf{0}_T, s_i)+v_i(\mathbf{0}_S,\mathbf{s}_T,s_i) = \alpha + \beta \geq \frac{2}{d+1}$.

We conclude with the following, where the third line follows from the fact that there are $2^{n-1}/2$ such $(S,T)$ pairs that partition $[n] \setminus \{i\}$:
	\begin{eqnarray*}
		\E_A \left[ v_i(\mathbf{s}^A,\mathbf{0}_B,s_i) \right] & = &\sum_{A \subseteq [n] \setminus \{i\}} \Pr[A]\cdot v_i(\mathbf{s}^A,\mathbf{0}_B,s_i)\\
		&= &\frac{1}{2^{n-1}}\cdot \sum_{A \subseteq [n] \setminus \{i\}} v_i(\mathbf{s}^A,\mathbf{0}_B, s_i)\\
		&\geq&\frac{1}{2^{n-1}}\cdot \frac{2^{n-1}}{2}\cdot \frac{2}{d+1}\ =\ \frac{1}{d+1},
	\end{eqnarray*}
	as desired.
\end{proof}

\section{Single-Parameter Valuations}
\label{sec:single-param}


In this section we describe the \texttt{Random Sampling Vickrey (RS-V) mechanism} that achieves a 4-approximation for single-parameter downward-closed environments with SOS valuations and a $2(d+1)$-approximation for $d$-SOS valuations.  
We then give a lower bound of 2 and $\sqrt{d}$ for SOS and $d$-SOS valuations respectively, even in the case of selling a single item.

Let $\mathcal{I}\subseteq 2^{[n]}$ be a downward-closed set system. We present a mechanism that serves only sets in $\mathcal{I}$ and gets a $2(d+1)$-approximation to the optimal welfare.\\

\noindent \texttt{Random Sampling Vickrey (RS-V):}

\begin{itemize}
	\item Elicit bids $\mathbf{\tilde{s}}$ from the agents.
	\item Partition the agents into two sets, $A$ and $B$,  uniformly at random.
	\item For $i \in B$, let $w_i = v_i(\mathbf{\tilde{s}}_A, \tilde{s}_i, \mathbf{0}_{B\setminus \{i\}})$.
	\item Allocate to a set of bidders in $${\argmax}_{S\in \mathcal{I}\ :\ S\subseteq B}\ \left\{\sum_{i\in S}w_i.\right\}$$		
\end{itemize}

\begin{theorem}	\label{thm:single-param-dc}
For agents
with SOS valuations, and for every downward-closed feasibility constraint $\mathcal{I}$,  \texttt{RS-V} is an ex-post IC-IR mechanism that gives $4$-approximation to the optimal welfare. 
For $d$-SOS valuations, the mechanism gives a $2(d+1)$-approximation to the optimal welfare.
\end{theorem}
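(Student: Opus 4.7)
The proof has two parts: incentive compatibility and the approximation guarantee.

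For \textbf{ex-post IC-IR}, by Proposition~\ref{prop:monotone} it suffices to show that the allocation rule of \texttt{RS-V} is monotone in each agent's reported signal for every realization of the random partition $(A,B)$. Condition on the partition. Agents in $A$ are never served, which is trivially monotone. For an agent $i \in B$, note that $w_i = v_i(\tilde{\bs}_A, \tilde{s}_i, \bzero_{B\setminus\{i\}})$ is weakly increasing in $\tilde{s}_i$ (by monotonicity of $v_i$), while for every other $j \in B$ the value $w_j = v_j(\tilde{\bs}_A, \tilde{s}_j, \bzero_{B\setminus\{j\}})$ does not depend on $\tilde{s}_i$ at all. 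Hence, if $i$ increases her reported signal, the total weight $\sum_{j \in S} w_j$ of any feasible set $S \subseteq B$ containing $i$ weakly increases, while the weight of any feasible set not containing $i$ is unchanged. With a consistent tie-breaking rule, this ensures that if $i$ was allocated before, she remains allocated after the increase.

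For the \textbf{approximation guarantee}, fix the true signal profile $\bs$ and let $S^* \in \argmax_{S \in \mathcal{I}} \sum_{i \in S} v_i(\bs)$, so $\opt = \sum_{i \in S^*} v_i(\bs)$. Since truthful reporting is an ex-post equilibrium we may assume $\tilde{\bs} = \bs$. For any realization of the partition $(A,B)$, the set $S^* \cap B$ is feasible by downward-closure, so the mechanism's welfare is at least $\sum_{i \in S^* \cap B} w_i$. Taking expectations over the partition,
\begin{align*}
\E[\text{ALG}] \;\geq\; \sum_{i \in S^*} \E\bigl[\mathbf{1}[i \in B] \cdot w_i\bigr]
\;=\; \sum_{i \in S^*} \tfrac{1}{2}\, \E_{A'}\!\left[ v_i(\bs_{A'}, s_i, \bzero_{([n]\setminus\{i\})\setminus A'}) \right],
\end{align*}
where $A'$ is a uniformly random subset of $[n]\setminus\{i\}$ (this follows from $\Pr[i \in B] = 1/2$ together with the fact that, conditioned on $i \in B$, the partition of $[n] \setminus \{i\}$ induced on $A$ vs.\ $B \setminus \{i\}$ is uniformly random).

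Now apply the Key Lemma (Lemma~\ref{lem:rs-value}) agent by agent: for each $i \in S^*$, since $v_i$ is $d$-SOS,
$$
\E_{A'}\!\left[ v_i(\bs_{A'}, s_i, \bzero_{([n]\setminus\{i\})\setminus A'}) \right] \;\geq\; \tfrac{1}{d+1}\, v_i(\bs).
$$
Summing over $i \in S^*$ yields $\E[\text{ALG}] \geq \tfrac{1}{2(d+1)} \sum_{i \in S^*} v_i(\bs) = \tfrac{1}{2(d+1)}\, \opt$, which specializes to a $4$-approximation when $d=1$. The only non-routine step in the whole argument is the Key Lemma itself, which is already proved; the main subtlety to double-check is the factor $1/2$ for $\Pr[i \in B]$ together with the claim that the induced conditional distribution on $[n]\setminus\{i\}$ is genuinely the uniform random subset needed to invoke Lemma~\ref{lem:rs-value}.
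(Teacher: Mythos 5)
Your proof is correct and follows essentially the same approach as the paper's: monotonicity via conditioning on the partition (losers in $A$ trivial, $w_i$ increasing and $w_j$ unaffected for $j\in B\setminus\{i\}$), then bounding the allocated welfare from below by the weight of the feasible set $S^*\cap B$ and invoking the Key Lemma agent by agent with $\Pr[i\in B]=1/2$. The only minor addition is your explicit mention of consistent tie-breaking, which the paper leaves implicit.
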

\begin{proof}
	We first show the allocation is monotone in one's signal, and hence, by Proposition~\ref{prop:monotone}, the mechanism is ex-post IC-IR. Fix a random partition $(A,B)$.
	\begin{itemize}
		\item Agents in $A$ are never allocated anything and thus their allocation is weakly monotone in their signal.
		\item For an agent $i\in B$, increasing $\rep_i$ can only increase $w_i$, whereas it leaves $w_j$ unchanged for all $j\in B \setminus \{i\}$. Thus, this only increases the weight of feasible sets (subsets of $B$ in $\mathcal{I}$) that $i$ belongs to. Therefore, increasing $s_i$ can only cause $i$ to go from being unallocated to being allocated.
	\end{itemize}

	
	For approximation, consider a set $S^*\in \argmax_{S\in \mathcal{I}}\sum_{i\in S}v_i(\mathbf{s})$ that maximizes social welfare. For every $i\in S^*$, from the Key Lemma~\ref{lem:rs-value}, we have that
	\begin{eqnarray}
	\E _B [w_{i} \cdot \mathbf{1}_{i \in B}]  = \E _B [v_{i}(\bs_i, \bs_A, \bzero_{B_{-i}}) ~|~ i \in B] \cdot Pr (i \in B)
	\ge \frac{ v_{i}(\bs)}{d+1}\cdot  \frac{1}{2}.\label{eq:rs-v-dc-sm}
	\end{eqnarray}
	
	For every set $B$, the fact that $\mathcal{I}$ is downward-closed implies that $S^*\cap B\in \mathcal{I}$. Therefore, $S^*\cap B$ is eligible to be selected by  \texttt{RS-V} as the allocated set of bidders. We have that the values of the bidders we allocate to are at least
	\begin{eqnarray*}
		\E_B \left[\max_{S\in \mathcal{I}: S\subseteq B}\sum_{i\in S}w_i\right] & \geq &  \E_B \left[\sum_{i\in S^*\cap B}w_i\right]
		\ =\  \E _B \left[\sum_{i\in S^*} w_{i} \cdot \mathbf{1}_{i \in B}\right]\\
		& =& \sum_{i\in S^*}\E _B \left[ w_{i} \cdot \mathbf{1}_{i \in B}\right]\ \geq \  \sum_{i\in S^*}\frac{ v_{i}(\bs)}{2(d+1)},
	\end{eqnarray*}
	as desired.  Since the allocated bidders' true values at $\mathbf{s}$ are only higher than the proxy values $w_i$, this continues to hold.
	
\end{proof}

We note that for the case of downward-closed feasibility constraints, even if the valuations satisfy single-crossing, there can be an $n-1$ gap between the optimal welfare and the welfare that the best deterministic mechanism can get. This is stated in Theorem~\ref{thm:dc-sc-lb} in Section~\ref{sec:dc-single-param}.

The following lower bounds, Theorem \ref{thm:noexpost} show that even for a single item setting, one cannot hope to get a better approximation than $2$ and $\Omega({\sqrt{d}})$ for SOS and $d$-SOS valuations respectively.
 The lower bounds apply to arbitrary randomized mechanisms\footnote{ A randomized mechanism takes as input the set of signals $\bs$
and produces as output $x_i (\bs)$ and $p_i(\bs)$ for each agent $i$,
where $x_i(\bs)$ is the probability that agent $i$ wins and $p_i(\bs)$ is agent $i$'s expected payment. Such a mechanism is ex-post IC (but not necessarily universally so) if and only if $x_i(s_i, \bs_{-i})$ is monotonically increasing in $s_i$.}.

\begin{theorem}\label{thm:noexpost}
	No ex-post IC-IR mechanism (not necessarily universal) for selling a single item can get a better approximation than
	\begin{enumerate}
		\item [(a)] a factor of 2 for SOS valuations.
		\item [(b)] a factor of $\Omega(\sqrt{d})$ for $d$-SOS valuations.
	\end{enumerate}
\end{theorem}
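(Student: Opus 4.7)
For both parts, I would exhibit explicit two-bidder single-item instances and argue directly from the ex-post IC-IR monotonicity characterization (Proposition~\ref{prop:monotone}) together with the feasibility constraint $x_1(\bs) + x_2(\bs) \le 1$ at every profile. For part~(a), I take signals $s_1, s_2 \in \{0,1\}$, set $v_1(\bs) \equiv 1$, and define $v_2(0,0) = 0$ and $v_2(\bs) = H$ for every $\bs \neq (0,0)$, with $H$ taken arbitrarily large. A direct check shows $v_2$ is SOS: for each $j$, the marginal in $s_j$ equals $H$ when the other signal is $0$ and $0$ when it is $1$, which is weakly non-increasing in the other signal; and $v_1$ is trivially SOS. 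A $\rho$-approximation at $(0,0)$ (where OPT $=1$ and only bidder~1 has positive value) forces $x_1(0,0) \ge 1/\rho$, while at $(1,1)$ (where OPT $=H$) it forces $x_2(1,1) \ge 1/\rho - 1/H$, and hence $x_1(1,1) \le 1 - 1/\rho + 1/H$. Monotonicity of $x_1$ in $s_1$ yields $x_1(0,0) \le x_1(1,0) \le x_1(1,1)$; letting $H \to \infty$ gives $\rho \ge 2$.

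For part~(b), the plan is to generalize this construction so that the monotonicity bottleneck loses a factor of $\sqrt{d}$ in each of two compounding directions, yielding an $\Omega(\sqrt{d})$ lower bound. A candidate two-bidder instance has $v_2$ chosen to be $d$-SOS with $v_2(0,0) = 0$, $v_2(1,1) = d$, and intermediate values at $(1,0)$ and $(0,1)$ of order $\sqrt{d}$ designed to saturate the $d$-SOS inequality at every edge of the signal lattice, together with a $v_1$ introducing a second, independent monotonicity bottleneck (for example, one that also depends on both signals through a saturated $d$-SOS function). Combining the $\rho$-approximation constraints at the four corners with monotonicity of $x_1$ in $s_1$ and of $x_2$ in $s_2$ should then force $\rho = \Omega(\sqrt{d})$.

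The main technical obstacle is ensuring that the compounding actually yields a multiplicative $\sqrt{d}$ gap rather than collapsing to the additive factor-$2$ gap of part~(a). A naive scaling of the part-(a) construction, even with $d$-SOS-tight intermediate values $v_2(1,0) = v_2(0,1) = \Theta(\sqrt{d})$, still gives only $\rho \ge 2$ because the single monotonicity chain $x_1(0,0) \le x_1(1,1)$ produces only an additive tightening. Overcoming this likely requires either a richer signal space of $\Theta(\log d)$ levels (so that a chain of lattice edges compounds geometrically to ratio $d$), or additional bidders whose valuations introduce independent monotonicity bottlenecks; the crux is then verifying that the compound $\sqrt{d}$ loss survives the feasibility constraint $\sum_i x_i(\bs) \le 1$ and cannot be circumvented by redistributing allocation probability between bidders along the chain.
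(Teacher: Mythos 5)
Your part~(a) is essentially the paper's argument (small instance, ex-post monotonicity, feasibility), but the monotonicity chain you write down contains an error: from $x_1(0,0)\le x_1(1,0)$ you jump to $x_1(1,0)\le x_1(1,1)$, and the second inequality is monotonicity of $x_1$ in $s_2$, which the ex-post characterization does \emph{not} give you --- monotonicity is only in one's own signal with the others held fixed. The fix is easy and makes the argument shorter: stop at the profile $(1,0)$. There $v_2(1,0)=H$, so OPT $=H$ and the approximation constraint forces $x_1(1,0)\le 1-1/\rho+1/H$; combined with $x_1(0,0)\ge 1/\rho$ and the single legitimate step $x_1(0,0)\le x_1(1,0)$ this gives $\rho\ge 2$ as $H\to\infty$. (The paper sidesteps the issue entirely by giving agent~2 no signal at all, so there is only one monotonicity direction to worry about. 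Also note that your $v_2$ is constant in $s_2$ on the $s_1=1$ slice, which technically violates the model's ``strictly increasing in own signal'' requirement; a vanishing $\epsilon\cdot s_2$ term repairs this without changing the bound.)

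Part~(b), however, is not a proof; you correctly identify that a two-bidder, two-profile version of the part~(a) construction only yields a factor~$2$ no matter how the intermediate values are set, and you speculate that more signal levels or more bidders might compound the loss, but you never exhibit an instance or carry through the calculation. This is the real gap. The paper's instance does use ``more bidders,'' one of the directions you float: it takes $n=\sqrt{d}$ agents, each with a binary signal, and the valuation $v_i(\bs)=\sum_{j\ne i}s_j+\epsilon s_i$ except that when \emph{all} $s_j=1$ for $j\ne i$ the value jumps to $d+\epsilon s_i$. One checks this is $d$-SOS because the only place the marginal inflates is the last step, by a factor at most $d$. At the all-ones profile feasibility forces some agent $i$ to have $x_i(\mathbf 1)\le 1/\sqrt d$; monotonicity in $s_i$ then gives $x_i(0_i,\mathbf 1_{-i})\le 1/\sqrt d$ as well; and at $(0_i,\mathbf 1_{-i})$ agent $i$'s value is $d$ while every other agent's value is only about $\sqrt d$, so the mechanism's welfare there is $O(\sqrt d)$ against OPT $=d$. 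The multiplicative gap comes from the feasibility pigeonhole over $\sqrt d$ agents, not from compounding lattice steps, which is why a two-bidder construction cannot reach it --- precisely the obstacle you observed but did not overcome.
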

\begin{proof}
	Let $x_i(\mathbf{s})$ be the probability agent $i$ is allocated at signal profile $\mathbf{s}$. Notice that for every $\mathbf{s}$, $\sum_i x_i(\mathbf{s})\leq 1$, otherwise the allocation rule is not feasible.
	\begin{enumerate}
		\item [(a)] Consider the case where there are two agents, 1 and 2, $s_1\in\{0,1\}$ and agent~2 has no signal. The valuations are $v_1(0)=1$, $v_1(1)=1+\epsilon$, $v_2(0)=0$ and $v_2(1)=H$ for $H\gg 1\gg \epsilon$. It is easy to see the valuations are SOS.
		
		In order to get better than a 2-approximation at $s_1 = 0$, we must have $x_1(0)>1/2$.  By monotonicity, this forces $x_1(1)>1/2$ as well, and hence $x_2(1)<1/2$ by feasibility. This implies that the expected welfare when $s_1=1$ is $x_1(1)v_1(1)+x_2(1)v_2(1) < H/2+1$, while the optimal welfare when $s_1 = 1$ is $H$. For a large $H$, this approaches a 2-approximation.  Note that this lower bound applies even given a known prior distribution on the signals in the event that we have a prior on the signals that satisfies: $\Pr[s_1=0]\cdot 1=\Pr[s_1=1]\cdot H$.
		\item [(b)] Consider the case where there are $n=\sqrt{d}$ agents and $s_i\in \{0,1\}$ for every agent $i$. The valuation of agent $i$ is
		\begin{eqnarray*}
			v_i(\mathbf{s})=\begin{cases}
				\sum_{j\neq i} s_j +\epsilon\cdot s_i\quad &\exists j\neq i\ :\ s_j =0\\
				d +\epsilon\cdot s_i\quad &s_j=1 \ \forall j\neq i,
			\end{cases}
		\end{eqnarray*}
		where $\epsilon\rightarrow 0$.
		
		To see that the valuations are $d$-SOS, notice that whenever a signal $s_j$ changes from~0 to~1, the valuation of agent $i\neq j$ increases by~1 \textit{unless} all other signals beside $i$'s are already set to $1$, in which case the valuation increases by $d-\sqrt{d}+2 < d$.
		Consider valuation profiles $\mathbf{s^i}=(0_i, \mathbf{1}_{-i})$. Note that by monotonicity, for every truthful mechanism, it must be the case that $x_i(\mathbf{s^i})\leq x_i(\mathbf{1})$. Since any feasible allocation rule must satisfy $\sum_{i=1}^{\sqrt{d}}x_i(\mathbf{1})\leq 1$, then it must be the case there exists some agent $i$ such that $x_i(\mathbf{1})\leq \frac{1}{\sqrt{d}}$, which by monotonicity implies that $x_i(\mathbf{s^i})\leq \frac{1}{\sqrt{d}}$. However, at profile $\mathbf{s^i}$, $v_i(\mathbf{s^i})=d$ while $v_j(\mathbf{s^i})=\sqrt{d}-2< \sqrt{d}$ for all $j\neq i$, so we get that the expected welfare of the mechanism at $\mathbf{s^i}$ is at most $x_i(\mathbf{s^i})\cdot d + (1-x_i(\mathbf{s^i}))\cdot \sqrt{d} \leq 2\sqrt{d},$ while the optimal welfare is $d$. Again, the lower bound also applies to the setting with known priors on the signals using a prior that satisfies: $\Pr[\mathbf{s^i}]=\Pr[\mathbf{s^j}]=\frac{1}{\sqrt{d}}$ for all $i$ and $j$.
	\end{enumerate}
\end{proof}

\section{Combinatorial Auctions with Separable Valuations}\label{sec:Combinatorial}

In this section we present an ex-post IC-IR mechanism that gives $1/4$ of the optimal social welfare in any combinatorial auction setting with separable SOS valuations  (as in Definition \ref{def:sepvaluation}). The mechanism, that we call the \texttt{Random-sampling VCG} auction is a natural extension of the \texttt{Random-Sampling Vickrey (RS-V)} auction presented in Section~\ref{sec:single-param}. Note that unlike \texttt{RS-V}, here we need to explicitly define payments so that the obtained mechanism is ex-post IC-IR. We derive VCG-inspired payments which align the objective of the mechanism with that of the agents. Separability is used here, as without it, the payment term would have been affected by the agent's report (while with separability, only the allocation is affected by it).


%


\vspace{0.2in}
\noindent \texttt{Random-Sampling VCG (RS-VCG):}
\begin{itemize}
	\item Agents report their signals $\tilde \bs$.
	\item Partition the agents into two sets $A$ and $B$ uniformly at random.
	\item For each agent $j\in B$ and bundle $T\subseteq[m]$, let $$w_{jT}:= v_{jT} ( \tilde s_{jT},\tilde\bs_{AT}, \bzero_{B_{-j}T}) =g_{-jT} (\tilde\bs_{AT},\bzero_{B_{-j}T}) + h_{jT} ( \tilde s_{jT}) .$$
	\item Let the allocation be $$\{T_i\}_{i\in B}\in \argmax_{\{S_i\}_{i\in B}}\sum_{i\in B}w_{iS_i};$$ \textit{i.e.,}  $\{T_i\}_{i\in B}$ is the allocation that maximizes the ``welfare'' using $w_{iT}$'s.
	\item Set the payment for a winning agent $i\in B$ receiving set of goods $T_i$ to be:
	$$p_i (\tilde\bs): = g_{-iT_i} (\tilde\bs_{-iT_i}) - g_{-iT_i} (\tilde\bs_{AT_i},\bzero_{B_{-i}T_i}) - \sum_{j \in B \setminus \{i\}} w_{jT_j} + w_{-i}, $$
	where $$w_{-i} =\max_ {\text{partitions }\{T'_j\} }\sum_{j \in B \setminus \{i\}} w_{jT'_j},$$
	that is, $w_{-i}$ is the weight of the best allocation without agent $i$.

\end{itemize}
Since the $w_{jT}$'s do not depend on agent $i$'s report (since $i$ is in $B$), $w_{-i}$ doesn't depend on agent $i$'s report. Therefore, we can (and will) ignore this term when considering incentive compatibility below.

Note also that since the maximal partition guarantees that $w_{-i}\ge\sum_{j \in B \setminus \{i\}} w_{jT_j}$, and monotonicity of valuations in signals guarantees that $g_{-iT_i} (\tilde\bs_{-i}) \ge g_{-iT_i} (\tilde\bs_{A},\bzero_{B_{-i}})$. Therefore, the payments $p_i (\tilde\bs)$ are always nonnegative.

\begin{theorem} \label{thm:rs-vcg}
\texttt{Random-Sampling VCG} is an ex-post IC-IR mechanism that gives a 4-approximation to the optimal social welfare for any combinatorial auction setting with separable SOS valuations.
\end{theorem}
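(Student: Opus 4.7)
The plan is to establish the two assertions in sequence: ex-post IC-IR first (the subtle part, where the VCG-style payment must align incentives in an interdependent setting) and then the $1/4$ welfare bound (which will follow directly from the Key Lemma via a random-sampling argument). Both arguments condition on an arbitrary realization of the partition $(A,B)$, so it suffices to verify them pointwise in the mechanism's randomness.

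For IC-IR, agents in $A$ always lose and pay nothing, hence are trivially IC-IR. Fix $i\in B$. The critical observation enabled by separability is that for every $j\in B\setminus\{i\}$ and every bundle $T$, the proxy $w_{jT}=g_{-jT}(\tilde\bs_{AT},\bzero_{B_{-j}T})+h_{jT}(\tilde s_{jT})$ is independent of $\tilde s_i$, since $i\in B_{-j}$ is zeroed out; likewise $w_{-i}$ is independent of $\tilde s_i$. Thus $i$'s report affects only her own proxies $\{w_{iT}\}_T$ (through $h_{iT}(\tilde s_{iT})$) and the mechanism's allocation $\{T_j\}$. Assuming the other agents are truthful, I would expand $v_{iT_i}(\bs)=g_{-iT_i}(\bs_{-iT_i})+h_{iT_i}(s_{iT_i})$ and substitute the payment. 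Writing $w_{iT}^\star:=g_{-iT}(\bs_{AT},\bzero_{B_{-i}T})+h_{iT}(s_{iT})$ for $i$'s true proxy, the term $g_{-iT_i}(\tilde\bs_{-iT_i})=g_{-iT_i}(\bs_{-iT_i})$ in the payment -- which is the place where separability matters, since $g_{-iT_i}$ does not depend on $s_i$ and other reports are truthful -- cancels the analogous term in $v_{iT_i}$, leaving
\[
u_i \;=\; w_{iT_i}^\star + \sum_{j\in B\setminus\{i\}} w_{jT_j} - w_{-i} \;=\; \sum_{j\in B} w_{jT_j}^\star - w_{-i},
\]
where the second equality uses that $w_{jT_j}=w_{jT_j}^\star$ for $j\ne i$ (independence of $\tilde s_i$). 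Since truthful reporting leads the mechanism to select exactly the partition that maximizes $\sum_{j\in B} w_{jT_j}^\star$, and $w_{-i}$ does not depend on $\tilde s_i$, truth-telling is optimal. Ex-post IR follows because the maximum over all partitions of $B$ is at least $w_{-i}$ (take $T_i=\emptyset$), so $u_i\ge 0$.

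For the welfare bound, let $\{T_i^{\mathrm{opt}}\}$ be an optimal partition for signals $\bs$. The restricted assignment giving $T_j^{\mathrm{opt}}$ to every $j\in B$ (and nothing to agents in $A$) is a feasible partition among $B$, so the mechanism's $w$-weight dominates $\sum_{j\in B} w_{jT_j^{\mathrm{opt}}}$; monotonicity of valuations in signals then promotes this to a lower bound on the true welfare delivered. Taking expectations over $(A,B)$,
\[
\E[\mathrm{welfare}] \;\ge\; \sum_j \Pr[j\in B]\cdot \E_A\!\left[v_{jT_j^{\mathrm{opt}}}\!\left(s_{jT_j^{\mathrm{opt}}},\bs_{AT_j^{\mathrm{opt}}},\bzero_{B_{-j}T_j^{\mathrm{opt}}}\right) \,\middle|\, j\in B\right].
\]
Separable SOS valuations are in particular SOS, so the Key Lemma (Lemma~\ref{lem:rs-value} with $d=1$) bounds each conditional expectation below by $\tfrac12 v_{jT_j^{\mathrm{opt}}}(\bs)$; combined with $\Pr[j\in B]=\tfrac12$, this yields $\E[\mathrm{welfare}]\ge \tfrac14\sum_j v_{jT_j^{\mathrm{opt}}}(\bs)$, as required.

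The main obstacle is the IC step: one has to track carefully which parts of the VCG payment depend on $\tilde s_i$, and the argument is really powered by the identity $g_{-iT_i}(\tilde\bs_{-iT_i})=g_{-iT_i}(\bs_{-iT_i})$, which holds \emph{for every report} $\tilde s_i$ only because separability removes $s_i$ from the $g_{-iT_i}$ component of $v_{iT_i}$. Without this feature, the cancellation in the utility calculation would leak a term depending on $\tilde s_i$ and truthful reporting would no longer be optimal. Everything else -- feasibility, IR, and the welfare bound -- is a clean extension of the single-parameter analysis in Section~\ref{sec:single-param}.
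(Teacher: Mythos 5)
Your proof is correct and follows essentially the same route as the paper's: you verify IC via the VCG-style cancellation that separability makes possible (isolating $g_{-iT_i}(\tilde\bs_{-iT_i})=g_{-iT_i}(\bs_{-iT_i})$), derive IR from the definition of $w_{-i}$, and obtain the $1/4$ bound by applying the Key Lemma to the optimal allocation restricted to $B$. Your explicit notation $w^\star_{iT}$ for the true-signal proxy is a small clarity improvement over the paper's implicit reuse of $w_{iT}$ in the utility computation, but the argument is the same.
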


\begin{proof}
First we show that if the agents bid truthfully, then the mechanism gives a 4-approximation to social welfare. 
For every agent $i$ and bundle $T$,
\begin{eqnarray}
\E _B [w_{iT} \cdot \mathbf{1}_{i \in B}]  = \E _B [v_{iT}(\bs_{iT}, \bs_{AT}, \bzero_{B_{-i}T}) ~|~ i \in B] \cdot Pr (i \in B)
\ge \frac{ v_{iT}(\bs_{T})}{2}\cdot  \frac{1}{2},\label{eq:rs-vcg-sm}
\end{eqnarray}
where the inequality follows by applying Lemma~\ref{lem:rs-value} with $d=1$.

Let $S^*_1, \ldots, S^*_n$ be the true welfare maximizing allocation. Then,
\begin{align*}
\E_B \left[\max_ {\text{partitions }\{T_i \}}~ \sum_{i\in B} w_{iT_i}\right] &\ge
\E _B \left[\sum_i w_{iS^*_i} \cdot \mathbf{1}_{i \in B}\right]\\
&= \sum_i \E _B [w_{iS^*_i} \cdot \mathbf{1}_{i \in B}]\ge \frac{1}{4} \sum_i v_{i S^*_i}(\bs_{S_i^*}),
\end{align*}
where the last inequality follows by substituting $S_i^*$ in $T$ in Equation~\eqref{eq:rs-vcg-sm} for every $i$. Since $v_{iT}(\bs)$ is always at least $w_{iT}$, this proves the approximation ratio.

Next, we show that \texttt{RS-VCG} is universally ex-post IC. Fix a random partition $(A,B)$.
Suppose that when all agents bid truthfully
$$\{T^*_j\}_{j\in B} = \argmax_{\text{partitions }\{T_j \}}\sum_{j\in B}w_{jT_j}.$$
Suppose that
all agents but $i\in B$ bid truthfully and $i$ bids $\bs'_i$ instead of his true signal vector $\bs_i$. Let $\{T'_j\}_{j\in B}$ be the resulting allocation. Therefore, agent $i$'s utility when reporting $s'_i$ (after disregarding the $w_{-i}$ term as mentioned above) is:
\begin{eqnarray*}
v_{iT'_i} (\bs) -p_i (\bs'_i, \bs_{-i}) &= &g_{-iT'_i} (\bs_{-iT'_i}) + h_{iT'_i} (\bs_{iT'_i}) - p_i(\bs'_i, \bs_{-i})\\
& = &g_{-iT'_i} (\bs_{-iT'_i}) + h_{iT'_i} (\bs_{iT'_i}) -  \left(g_{-iT'_i} (\bs_{-iT'_i}) - g_{-iT'_i} (\bs_{AT'_i}, \bzero_{B_{-i}T'_i}) - \sum_{j \in B \setminus \{i\}} w_{jT'_j} \right)\\
&= & h_{iT'_i} (\bs_{iT'_i}) + g_{-iT'_i} (\bs_{AT'_i},\bzero_{B_{-i}T'_i}) + \sum_{j \in B \setminus \{i\}} w_{jT'_j}\\
&= &w_{iT'_i}  + \sum_{j \in B \setminus \{i\}} w_{jT'_j}  \ = \ \sum_{j\in B} w_{j T'_j}  \\
&\le&  \sum_{j\in B} w_{j T^*_j},
\end{eqnarray*}
where $\sum_{j\in B}w_{j T^*_j}$ is $i$'s utility for bidding truthfully.

Finally, we show that the mechanism is ex-post IR.
Indeed, from above, agent $i$'s utility when reporting truthfully (and without disregarding the $w_{-i}$ term) is
$$v_{iT^*_i} (\bs_{T^*_i}) -p_i (\bs)  = \sum_{j\in B} w_{j T^*_j} - w_{-i}=\sum_{j\in B} w_{j T^*_j} -
\max_ {\text{partitions }\{T'_j\} }\sum_{j \in B \setminus \{i\}} w_{jT'_j} \ge 0.$$

\end{proof}

In the case of separable $d$-SOS valuations, the \texttt{Random-Sampling VCG} is an ex-post IC-IR mechanism that gives $2(d+1)$-approximation to the social welfare. The proof is identical to Theorem \ref{thm:rs-vcg}, except that Equation~\eqref{eq:rs-vcg-sm} is changed to
\begin{eqnarray*}
	\E _B [w_{iT} \cdot \mathbf{1}_{i \in B}] \ge \frac{ v_{iT}(\bs_{T})}{2(d+1)},
\end{eqnarray*}
since we apply Lemma~\ref{lem:rs-value} with an arbitrary $d$.

\begin{remark}
Theorem \ref{thm:rs-vcg} is clearly analogous to the VCG mechanism for combinatorial auctions with private values. As with VCG for private values, in many cases, there is unlikely to be a polynomial time algorithm to compute allocations and payments. Exceptions include settings we know and love such as unit-demand auctions, additive valuations, etc.
\end{remark}

\section{Combinatorial Auctions with Single-Dimensional Signals}
\label{sec:comb-auction-single-dimension}

In this section we consider combinatorial valuations (general combinatorial auctions) with single-dimensional signals (as given by Definition~\ref{def:singlecomb}).




When the signal space of each agent is of size at most $k$, we present a mechanism that gets $(k+3)$-approximation for SOS valuations (see Section \ref{sec:k-sig-sm}), and a mechanism that gets $(2\log_2 k+4)$-approximation  for strong-SOS valuations (Definition~\ref{def:ssm}, see Section \ref{sec:k-sig-ssm} for details regarding the mechanism). For $d$-SOS and $d$-strong-SOS valuations, the mechanism generalizes to give $O(dk)$- and $O(d^2\log k)$-approximations respectively, as shown in Section~\ref{sec:d-sm}.

We first decompose the optimal welfare into two parts, $\mathsf{OTHER}$ and $\mathsf{SELF}$. Each part will be covered by a corresponding mechanism. Let $T^*=\{T_i^*\}_{i\in[n]}$ be a welfare-maximizing allocation at signal profile $\mathbf{s}$, and let $W^*(\mathbf{s})$ be the social welfare of $T^*$ at $\mathbf{s}$. Consider the following decomposition:
\begin{eqnarray}
W^*(\mathbf{s}) & = & \sum_{i}v_{iT_i^*}(\mathbf{s})\nonumber\\
& = & \sum_{i}v_{iT_i^*}(\mathbf{s}_{-i},0_i)+\sum_{i\ :\  s_i>0}\left(v_{iT_i^*}(\mathbf{s}) -v_{iT_i^*}(\mathbf{s}_{-i},0_i)\right)\nonumber\\
&\leq & \sum_{i}v_{iT_i^*}(\mathbf{s}_{-i},0_i)+\sum_{i\ :\  s_i>0}\left(v_{iT_i^*}(\mathbf{0}_{-i}, s_i) -v_{iT_i^*}(\mathbf{0})\right) \label{eq:decomp-sm} \\
& \leq & \underbrace{\sum_i v_{iT_i^*}(\mathbf{s}_{-i},0_i)}_{\text{$\mathsf{OTHER}$}}+\underbrace{\sum_{\ell=1}^{k-1}\sum_{i\ :\  s_i=\ell}v_{iT_i^*}(\mathbf{0}_{-i}, s_i)}_{\text{$\mathsf{SELF}$}},\label{eq:decomposition}
\end{eqnarray}
where Equation~\eqref{eq:decomp-sm} follows from the definition of submodularity (and therefore, also follows the definition of strong-submodularity). The last inequality follows from the non-negativity of $v_{iT_i^*}(\mathbf{0})$. The first term in the decomposition represents the contribution of others' signals to one's value from his allocated bundle, while the second term represents one's contribution to his own value. Each of these terms will be targeted using a different mechanism. Whereas the $\mathsf{OTHER}$ term will be targeted using the same mechanism in both the SOS and strong-SOS cases, the $\mathsf{SELF}$ term will be treated differently.

\subsection{$(k+3)$-approximation for SOS valuations}\label{sec:k-sig-sm}


Suppose $s_i\in \{0,1,\ldots,k-1\}$ for all $i$. The mechanism is as follows:

\vspace{0.1in}
\noindent Mechanism \texttt{$k$ signals High-Low ($k$-HL):}

\noindent With probability $p_{RT}=\frac{k-1}{k+3}$, run \texttt{Random Threshold}; otherwise, run \texttt{Random Sampling}, as described below:

\noindent Mechanism \texttt{Random Threshold}

\begin{itemize}
	\item Choose a random threshold $\ell$ uniformly in $\{1,\ldots, k-1\}$.
	\item Let $N_{\geq\ell}=\{i\ :\ s_i\geq \ell\}$ be the ``high'' agents; i.e., agents with signal at least $\ell$, and let $N_{<\ell}=[n]\setminus N_{\geq \ell}$ be the ``low'' agents.
	\item For every high agent $i\in N_{\geq\ell}$ and bundle $T$, let $\bar{v}_{iT}:=v_{iT}(\mathbf{s}_{N_{<\ell}},\mathbf{\ell}_{N_{\geq\ell}})$
    \item For every low agent $i\in N_{<\ell}$ and bundle $T$, let $\bar{v}_{iT}:=0$.
	\item Let the allocation be $$\bar{T} \in \argmax_{S=\{S_i\}_{i\in N_{\geq\ell}}} \sum_{i\in N_{\geq \ell}}\bar{v}_{iS_i}.$$ (\textit{i.e.}, the allocation that maximizes the ``welfare'' of high agents using values $\bar{v}_{iT}$.)
	\item Agent $i$ that receives bundle $\bar{T}_i$ pays $v_{i\bar{T}_i}(\mathbf{s}_{-i}, s_i=\ell-1)$.
\end{itemize}

\noindent Mechanism \texttt{Random Sampling}

\begin{itemize}
	\item Split the agents into sets $A$ and $B$ uniformly at random.
	\item For each $i\in B$ and bundle $T$, let $\tilde{v}_{iT}:=v_{ij}(\mathbf{s}_A, \mathbf{0}_B)$.
    \item For each $i\in A$ and bundle $T$, let $\tilde{v}_{iT}:=0$.
	\item Let the allocation be $$\tilde{T} \in \argmax_{S=\{S_i\}_{i\in B}} \sum_{i\in B}\tilde{v}_{iS_i}.$$ (\textit{i.e.}, the allocation that maximizes the ``welfare'' of agents in $B$ using values $\tilde{v}_{iT}$.)
    \item Charge no payments.
\end{itemize}

The \texttt{k-HL} mechanism is a random combination of two mechanisms: \texttt{Random Threshold} approximates the welfare contribution of the bidders' signals to their own value (the \textsf{SELF} term); \texttt{Random Sampling} approximates the welfare contributions of the bidders' signals to other bidders' values (the \textsf{OTHER} term).
We wish to prove the following theorem.

\begin{theorem} \label{thm:k-sig-sm}
	For every combinatorial auction setting with SOS valuations, single-dimensional signals, and signal space of size $k$, i.e. $s_i \in \{0, 1,\ldots, k-1\} \, \forall i$, mechanism \texttt{$k$-HL} is an ex-post IC-IR mechanism that gives $(k+3)$-approximation to the optimal social welfare.
\end{theorem}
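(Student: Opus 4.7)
My plan is to argue incentive compatibility for each of the two sub-mechanisms separately, then separately bound the expected welfare each one contributes, targeting the two halves of the decomposition in~\eqref{eq:decomposition}: Random Sampling will cover $\mathsf{OTHER}$, Random Threshold will cover $\mathsf{SELF}$, and the mixing probability $p_{RT}=(k-1)/(k+3)$ will balance them.

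\textbf{Incentive compatibility.} For Random Sampling, fix the partition $(A,B)$. Agents in $A$ never receive anything and pay nothing. For $i\in B$, the values $\tilde v_{jT}$ used by the mechanism depend only on $\mathbf{s}_A$ (agent $i$'s own signal is zeroed out), so $i$'s report cannot influence the allocation or payment; truthful reporting is a best response and utility is nonnegative. For Random Threshold with a fixed $\ell$, the crucial observation is that any agent $i$ with reported signal at least $\ell$ has her signal replaced by $\ell$ everywhere it enters either the optimization or the allocation, so within $[\ell,k-1]$ the mechanism's outcome is independent of the precise report. The payment $v_{i\bar T_i}(\mathbf{s}_{-i},s_i=\ell-1)$ is a threshold payment: a truly-high agent ($s_i\geq \ell$) gets utility $v_{i\bar T_i}(\mathbf{s})-v_{i\bar T_i}(\mathbf{s}_{-i},\ell-1)\geq 0$ by monotonicity, while a truly-low agent ($s_i<\ell$, i.e.\ $s_i\leq \ell-1$) who deviates upward pays at least her value and receives nonpositive utility. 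Hence both sub-mechanisms are deterministically ex-post IC-IR for every realization of the randomness, so $k$-HL is universally ex-post IC-IR.

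\textbf{Random Sampling covers $\mathsf{OTHER}$.} Since $v_{iT}(\cdot,0_i)$, viewed as a function of $\mathbf{s}_{-i}$, is itself SOS (fixing $s_i=0$), Lemma~\ref{lem:rs-value} (with $d=1$) applied to the uniformly random partition of $[n]\setminus\{i\}$ into $A$ and $B\setminus\{i\}$ yields, conditional on $i\in B$,
\[ \E_A\bigl[v_{iT}(\mathbf{s}_A,\mathbf{0}_{B\setminus\{i\}},0_i)\bigr] \;\geq\; \tfrac12\, v_{iT}(\mathbf{s}_{-i},0_i). \]
Multiplying by $\Pr[i\in B]=1/2$, we get $\E_B[\tilde v_{iT}\cdot \mathbf{1}_{i\in B}]\geq \tfrac14 v_{iT}(\mathbf{s}_{-i},0_i)$. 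Because $\tilde T$ maximizes $\sum_{i\in B}\tilde v_{iS_i}$ over feasible allocations to $B$, we can lower-bound it by the optimal allocation $T^*$ restricted to $B$ and apply the inequality above coordinatewise; together with $v_{i\tilde T_i}(\mathbf{s})\geq \tilde v_{i\tilde T_i}$ (monotonicity), the expected realized welfare from Random Sampling is at least $\tfrac14\sum_i v_{iT_i^*}(\mathbf{s}_{-i},0_i)=\tfrac14\,\mathsf{OTHER}$.

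\textbf{Random Threshold covers $\mathsf{SELF}$.} For a fixed threshold $\ell$, optimality of $\bar T$ in the $\bar v$-welfare, together with monotonicity ($v_{i\bar T_i}(\mathbf{s})\geq \bar v_{i\bar T_i}$), gives that the realized welfare is at least $\sum_{i\in N_{\geq\ell}}\bar v_{iT_i^*}$. Restricting this sum to agents with $s_i=\ell$ and observing that for such an agent $\bar v_{iT_i^*}=v_{iT_i^*}(\mathbf{s}_{N_{<\ell}},\boldsymbol{\ell}_{N_{\geq\ell}})\geq v_{iT_i^*}(\mathbf{0}_{-i},\ell)=v_{iT_i^*}(\mathbf{0}_{-i},s_i)$ by monotonicity, we get the bound $\sum_{i:s_i=\ell}v_{iT_i^*}(\mathbf{0}_{-i},s_i)$. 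Averaging over the uniform choice $\ell\in\{1,\dots,k-1\}$ yields expected welfare at least $\tfrac{1}{k-1}\,\mathsf{SELF}$.

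\textbf{Combining.} The expected welfare of $k$-HL is at least
\[ p_{RT}\cdot \tfrac{1}{k-1}\,\mathsf{SELF} + (1-p_{RT})\cdot \tfrac14\,\mathsf{OTHER} \;=\; \tfrac{1}{k+3}\bigl(\mathsf{SELF}+\mathsf{OTHER}\bigr)\;\geq\;\tfrac{1}{k+3}W^*(\mathbf{s}), \]
by the choice $p_{RT}=(k-1)/(k+3)$ and the decomposition~\eqref{eq:decomposition}. The main delicate step is the Random Threshold analysis: one must be careful that clamping the high agents' signals to $\ell$ lower-bounds the $\mathsf{SELF}$ contribution at exactly level $\ell$ (and not at other levels), so that the uniform average over $\ell$ telescopes cleanly into $\mathsf{SELF}/(k-1)$; every other step is a direct application of the Key Lemma or of monotonicity.
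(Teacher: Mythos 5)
Your proof is correct and follows essentially the same route as the paper: you establish ex-post IC-IR for the two sub-mechanisms by the same arguments (ignoring potential winners' signals in Random Sampling; threshold pricing with signals clamped to $\ell$ in Random Threshold), you bound \textsf{OTHER} via the Key Lemma applied to $v_{iT}(\cdot,0_i)$ exactly as in the paper's Lemma~\ref{lem:other_cover}, you bound \textsf{SELF} via the uniform average over $\ell$ exactly as in the paper's Lemma~\ref{lem:self_cover}, and you combine with the same mixing probability $p_{RT}=(k-1)/(k+3)$ and the decomposition~\eqref{eq:decomposition}. The only difference is organizational (the paper factors the two welfare bounds into standalone lemmas), not mathematical.
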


We first argue that the mechanism is ex-post IC-IR.
\begin{proofof}{ex-post IC-IR}
\texttt{Random Sampling} is ex-post IC-IR since the agents that might receive items (agents in $B$) cannot change the allocation since their signals are ignored (and they pay nothing).	
	
As for \texttt{Random Threshold}, consider a threshold $\ell$ chosen by the mechanism. If the agent's signal is below $\ell$ and the agent reports $\ell$ or above, then his payment, if allocated bundle $T$ is $v_{iT}(\mathbf{s}_{-i},s_i=\ell-1)\geq v_{iT}(\mathbf{s})$; \textit{i.e.,} the agent's utility is non-positive. Bidding a different value below $\ell$ will grant the agent no items. If his value is $\ell$ or above, then bidding a different signal above $\ell$ will result in the same outcome, since the sets $N_{\geq\ell}$ and $N_{<\ell}$ remain the same. If he bids a signal below $\ell$, then he won't receive any item, and his utility will be~0, while bidding his true signal will result in non-negative utility.
\end{proofof}
In Lemma~\ref{lem:other_cover}, we prove that \texttt{Random Sampling} covers the $\mathsf{OTHER}$ component of the social welfare, and in Lemma~\ref{lem:self_cover}, we show that \texttt{Random Threshold} covers the $\mathsf{SELF}$ component.

\begin{lemma} \label{lem:self_cover}
	For SOS valuations, the \texttt{Random Threshold} mechanism gives a $(k-1)$-approximation to the \textsf{SELF} component of the optimal social welfare.
\end{lemma}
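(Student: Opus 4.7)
The plan is to fix an arbitrary threshold $\ell \in \{1,\ldots,k-1\}$, analyze the expected contribution conditional on the mechanism drawing $\ell$, and then average uniformly over the $k-1$ possible thresholds to obtain the $1/(k-1)$ factor against the entire $\mathsf{SELF}$ sum.

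First I would focus on agents with $s_i = \ell$ \emph{exactly}; note that each such agent belongs to $N_{\geq \ell}$. For these agents I would compare the surrogate value $\bar{v}_{iT_i^*} = v_{iT_i^*}(\bs_{N_{<\ell}},\boldsymbol{\ell}_{N_{\geq \ell}})$ used by the mechanism with the term $v_{iT_i^*}(\bzero_{-i}, s_i)$ appearing in $\mathsf{SELF}$. Since the coordinate for $i$ in the surrogate is $\ell = s_i$ and every other coordinate in the surrogate is nonnegative (either a genuine signal in $N_{<\ell}$ or $\ell$ for the rest of $N_{\geq \ell}$), weak monotonicity of $v_{iT_i^*}$ in each signal yields
\[
\bar{v}_{iT_i^*} \;\geq\; v_{iT_i^*}(\bzero_{-i},s_i).
\]

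Next I would use the optimality of the allocation $\bar T$ computed by \texttt{Random Threshold}. The restriction $\{T_i^*\}_{i\in N_{\geq\ell}}$ of the welfare-maximizing allocation is a feasible allocation among the high agents, so
\[
\sum_{i\in N_{\geq\ell}}\bar v_{i\bar T_i} \;\geq\; \sum_{i\in N_{\geq\ell}}\bar v_{iT_i^*} \;\geq\; \sum_{i:\,s_i=\ell}\bar v_{iT_i^*} \;\geq\; \sum_{i:\,s_i=\ell} v_{iT_i^*}(\bzero_{-i},s_i),
\]
where the second inequality drops the nonnegative contributions from agents with $s_i > \ell$ and the third is the previous display. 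Finally, since the true signal vector $\bs$ dominates the truncated vector used to define $\bar v$ coordinate-wise, another application of monotonicity gives that the realized welfare $\sum_{i\in N_{\geq\ell}} v_{i\bar T_i}(\bs)$ is at least $\sum_{i\in N_{\geq\ell}} \bar v_{i\bar T_i}$.

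Putting the three steps together, conditional on threshold $\ell$ being selected the expected welfare is at least $\sum_{i:s_i=\ell} v_{iT_i^*}(\bzero_{-i},s_i)$. Averaging over the uniform choice of $\ell\in\{1,\ldots,k-1\}$ yields
\[
\mathbb{E}[\text{welfare of \texttt{Random Threshold}}] \;\geq\; \frac{1}{k-1}\sum_{\ell=1}^{k-1}\sum_{i:\,s_i=\ell} v_{iT_i^*}(\bzero_{-i},s_i) \;=\; \frac{\mathsf{SELF}}{k-1},
\]
which is the claim. The proof uses only monotonicity of $v_{iT}$ in each signal and the optimality of $\bar T$; SOS is not invoked here (it is needed only to justify the decomposition~\eqref{eq:decomposition} itself). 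The main thing to be careful about is bookkeeping which agents are in $N_{\geq\ell}$ and ensuring the surrogate-to-truth monotonicity step is valid, but there is no real obstacle.
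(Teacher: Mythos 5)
Your proof is correct and follows essentially the same path as the paper's: compare the surrogate $\bar v_{iT_i^*}$ to $v_{iT_i^*}(\bzero_{-i},s_i)$ for agents with $s_i=\ell$ via monotonicity, invoke optimality of $\bar T$ and drop the agents with $s_i>\ell$, pass from surrogate to true values by monotonicity again, then average over $\ell$. Your aside that the argument uses only monotonicity and optimality (SOS entering only in the decomposition \eqref{eq:decomposition}) is accurate and matches what the paper's proof implicitly relies on.
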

\begin{proof}
	Consider a threshold $\ell\in \{1,\ldots,k-1\}$ chosen in \texttt{Random Threshold}. Whenever $\ell$ is chosen, we have that
	\begin{eqnarray*}
		\sum_{i\ :\ s_i=\ell} \bar{v}_{iT_i^*}  \ =\  \sum_{i\ :\ s_i=\ell} v_{iT_i^*}(\mathbf{s}_{N_{<\ell}},\mathbf{\ell}_{N_{\geq\ell}})\ \geq\ \sum_{i\ :\ s_i=\ell} v_{iT_i^*}(\mathbf{0}_{-i},s_i).
	\end{eqnarray*}
	Since \texttt{Random Threshold} chooses an allocation $\bar{T}=\{\bar{T}_i\}_{i\in N_{\geq \ell}}$ that maximizes the welfare under $\bar{v}_{iT}$'s, the value of the allocation is only larger than the left expression above. Because $v_{i\bar{T}_i}(\mathbf{s})\geq \bar{v}_{i\bar{T}_i}$, we get that if $\ell$ was chosen, which happens with probability $\frac{1}{k-1}$, the welfare achieved is at least $\sum\limits_{i\ :\ s_i=\ell} v_{iT_i^*}(\mathbf{0}_{-i},s_i).$ Therefore, the welfare from running \texttt{Random Threshold} is at least
	$$\sum_{\ell=1}^{k-1} \frac{1}{k-1}\sum_{i\ :\ s_i=\ell} v_{iT_i^*}(\mathbf{0}_{-i},s_i),\geq \frac{\mathsf{SELF}}{k-1}.$$
	
\end{proof}

\begin{lemma} \label{lem:other_cover}
	For SOS valuations, the \texttt{Random Sampling} mechanism gives a $4$-approximation to the \textsf{OTHER} component of the optimal social welfare.
\end{lemma}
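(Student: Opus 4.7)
The plan is to combine the Key Lemma (Lemma~\ref{lem:rs-value}) with the optimality of the allocation $\tilde{T}$ chosen by \texttt{Random Sampling}. For each agent $i$, I would show that the proxy value $\tilde{v}_{iT^*_i}$ is, in expectation over the random partition, at least $\tfrac14 v_{iT^*_i}(\mathbf{s}_{-i},0_i)$, and then sum across $i$.

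First, I would apply the Key Lemma to the SOS function $v_{iT^*_i}(\cdot)$, viewing it as a function of the signals of agents in $[n]\setminus\{i\}$ with agent $i$'s own signal pinned at $0$. Conditioned on $i \in B$, the set $A$ is a uniformly random subset of $[n]\setminus\{i\}$, so Lemma~\ref{lem:rs-value} (with $d=1$) yields
$$\E_{A,B}\bigl[\tilde{v}_{iT^*_i} \,\big|\, i\in B\bigr] \;=\; \E_{A}\bigl[v_{iT^*_i}(\mathbf{s}_A,\mathbf{0}_{B_{-i}},0_i) \,\big|\, i\in B\bigr] \;\geq\; \tfrac{1}{2}\,v_{iT^*_i}(\mathbf{s}_{-i},0_i).$$
Combining this with $\Pr[i\in B]=\tfrac12$ and with $\tilde{v}_{iT}=0$ for $i\in A$,
$$\E_{A,B}[\tilde{v}_{iT^*_i}] \;\geq\; \tfrac{1}{2}\cdot\tfrac{1}{2}\,v_{iT^*_i}(\mathbf{s}_{-i},0_i) \;=\; \tfrac{1}{4}\,v_{iT^*_i}(\mathbf{s}_{-i},0_i).$$

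Second, I would pass from this per-agent bound to a bound on the welfare actually produced by the mechanism. Observe that the restriction of $T^*$ to $B$ (giving agents in $A$ nothing) is a feasible allocation among $B$, so by optimality of $\tilde{T}$ under the proxy weights, $\sum_{i\in B}\tilde{v}_{i\tilde{T}_i} \geq \sum_{i\in B}\tilde{v}_{iT^*_i}$. Since valuations are monotone in signals, $v_{i\tilde{T}_i}(\mathbf{s}) \geq \tilde{v}_{i\tilde{T}_i}$. Chaining these with linearity of expectation gives
$$\E_{A,B}\!\left[\sum_{i} v_{i\tilde{T}_i}(\mathbf{s})\right] \;\geq\; \E_{A,B}\!\left[\sum_{i} \tilde{v}_{iT^*_i}\right] \;=\; \sum_{i}\E_{A,B}[\tilde{v}_{iT^*_i}] \;\geq\; \tfrac{1}{4}\sum_i v_{iT^*_i}(\mathbf{s}_{-i},0_i) \;=\; \tfrac{1}{4}\,\mathsf{OTHER},$$
which is the desired approximation.

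The argument is short and the main subtlety, rather than an obstacle, lies in the correct invocation of the Key Lemma: one must treat $v_{iT^*_i}$ as a function on $\mathbb{R}_+^{n-1}$ by fixing $s_i=0$ (so that the ``base point'' of the key lemma matches the quantity $v_{iT^*_i}(\mathbf{s}_{-i},0_i)$ that defines $\mathsf{OTHER}$), and recognize that conditioning the mechanism's coin flips on $i\in B$ makes $A$ a uniformly random subset of $[n]\setminus\{i\}$, which is exactly the distribution to which Lemma~\ref{lem:rs-value} applies.
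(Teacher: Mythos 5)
Your proof is correct and follows essentially the same route as the paper's: apply the Key Lemma to $v_{iT^*_i}$ with $s_i$ pinned at $0$ to get the per-agent bound $\E_{A,B}[\tilde v_{iT^*_i}]\geq\tfrac14 v_{iT^*_i}(\mathbf s_{-i},0_i)$, sum over agents, and then pass from proxy weights to actual welfare via the optimality of $\tilde T$ and monotonicity of valuations in signals. The only difference is cosmetic: you spell out that the restriction of $T^*$ to $B$ is a feasible allocation and that conditioning on $i\in B$ makes $A$ uniform over subsets of $[n]\setminus\{i\}$, whereas the paper folds these observations into a single displayed inequality.
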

\begin{proof}
	Consider a set $T$. Using an application of the Key Lemma~\ref{lem:rs-value} with respect to $v_{iT}(\mathbf{s}_{-i}, 0_i)$, we see that
	\begin{eqnarray}
		\E_{A,B}[\tilde{v}_{iT}]\geq \Pr[i\in B]\cdot \E_{A,B}[\tilde{v}_{iT}\ |\ i\in B]=\frac{1}{2}\E_{A,B \setminus i}[\tilde{v}_{iT}\ |\ i\in B]\geq\frac{1}{4}v_{iT}(\mathbf{s}_{-i},0_i).\label{eq:other_cover}
	\end{eqnarray}
	
	Therefore, the expected weight of the allocation $\{T_i^*\}_{i\in [n]}$ using weights $\tilde{v}_{iT}$'s is
	$$\E_{A,B}\bigg[\sum_{i}\tilde{v}_{iT_i^*}\bigg] \ =\  \sum_{i}\E_{A,B}\bigg[\tilde{v}_{iT_i^*}\bigg]\ \geq\ \sum_{i} \frac{1}{4}v_{iT_i^*}(\mathbf{s}_{-i},0_i)\ =\ \frac{\mathsf{OTHER}}{4}.$$
	Since the mechanism chooses the optimal allocation according to the $\tilde{v}_{iT}$'s, its weight can only be larger. Moreover, since $\tilde{v}_{iT} = v_{iT}(\mathbf{s}_{-i},0)\leq v_{iT}(\mathbf{s})$, the welfare achieved by the mechanism is at least $\frac{\mathsf{OTHER}}{4}$, as desired.
\end{proof}

We conclude by proving the claimed approximation ratio.

\begin{proofof}{approximation}
	 According to Lemma~\ref{lem:self_cover}, \texttt{Random Threshold} approximates $\mathsf{SELF}$ to a factor of $k-1$. According to Lemma~\ref{lem:other_cover} that \texttt{Random Sampling} approximates $\mathsf{OTHER}$ to a factor of $4$.
	Therefore, running \texttt{Random Threshold} with probability $p_{RT}$ and \texttt{Random Sampling} with probability $1-p_{RT}$ yields a welfare of
	\begin{eqnarray*}
		p_{RT}\frac{\mathsf{SELF}}{k-1} + (1-p_{RT})\frac{\mathsf{OTHER}}{4} &= & \frac{k-1}{k+3}\cdot \frac{\mathsf{SELF}}{k-1} + \frac{4}{k+3}\cdot \frac{\mathsf{OTHER}}{4}\\
		& =  & \frac{\mathsf{SELF}+\mathsf{OTHER}}{k+3}
		\  \geq \ \frac{W^*(\bs)}{k+3},
	\end{eqnarray*}
	where the inequality follows Equation~\eqref{eq:decomposition}.
\end{proofof}

\subsection{$O(\log k)$-Approximation with Strong-SOS Valuations} \label{sec:k-sig-ssm}

Strong-SOS valuations means the effect on the valuation is concave in one's own signal.
This allows us to use a bucketing technique in order to give an $O(\log k)$-approximation 
to the \textsf{SELF} component in the decomposition depicted by Equation~\eqref{eq:decomposition}.

Consider the \textsf{SELF} term in Equation~\eqref{eq:decomposition}. We can bound this term as follows:
\begin{eqnarray}
\mathsf{SELF} &=&	\sum_{\ell=1}^{k-1}\sum_{i\ :\  s_i=\ell}v_{iT_i^*}(\mathbf{0}_{-i}, s_i)\nonumber\\
& = & \sum_{\ell=1}^{\log_2 k}\sum_{i\ :\ 2^{\ell-1}\leq s_i < 2^\ell} v_{iT_i^*}(\mathbf{0}_{-i}, s_i)\nonumber\\
& \leq & \sum_{\ell=1}^{\log_2 k}\sum_{i\ :\ 2^{\ell-1}\leq s_i < 2^\ell} v_{iT_i^*}(\mathbf{0}_{-i}, {2^{\ell-1}}_i), \label{eq:selfbound_strong_submod}
\end{eqnarray}
where the inequality follows the definition of strong-SOS valuations.

We introduce mechanism \texttt{Random Bucket} to give an $O(\log k)$-approximation
to the upper bound in Equation~\eqref{eq:selfbound_strong_submod}.

\vspace{0.1in}
\noindent Mechanism \texttt{Random Bucket}:
\begin{itemize}
	\item choose $\ell$ uniformly in $\{1,\ldots, \log_2 k\}$.
	\item Let $N_{B_\ell}=\{i\ :\ \mbox{such that }s_i\geq 2^{\ell-1}\}$ be the agents with signal at least $2^{\ell-1}$ and $N_{\neg B_\ell}=[n]\setminus N_{B_\ell}$.
	\item For $i\in N_{B_\ell}$ and bundle $T$, let $\bar{v}_{iT}:=v_{iT}(\mathbf{s}_{N_{\neg B_\ell}},\mathbf{2^{\ell-1}}_{N_{B_\ell}})$ (and $\bar{v}_{iT}:=0$ for $i\in N_{\neg B_\ell}$).
	\item Let the allocation be $$\bar{T} \in \argmax_{S=\{S_i\}_{i\in N_{B_\ell}}} \sum_{i\in N_{B_\ell}}\bar{v}_{iS_i}.$$ (\textit{i.e.}, the allocation that maximizes the ``welfare'' of high agents using values $\bar{v}_{iT}$.)
	\item Agent $i$ that receives bundle $\bar{T}_i$ pays $v_{i\bar{T}_i}(\mathbf{s}_{-i}, s_i=2^{\ell-1}-1)$.
\end{itemize}

We show the following approximation guarantee regarding \texttt{Random Bucket}.

\begin{lemma} \label{lem:self_cover_strongly_submod}
	For strong-SOS valuations, the \texttt{Random Bucket} mechanism is ex-post IC-IR and gives a $2\log_2{k}$ approximation to the \textsf{SELF} component of the optimal social welfare.
\end{lemma}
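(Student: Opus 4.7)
The plan is to mirror the proof of Lemma~\ref{lem:self_cover}, adapting the threshold mechanism to the bucketing scheme. First I would prove ex-post IC-IR, then establish the approximation factor, using Equation~\eqref{eq:selfbound_strong_submod} as the target bound on $\mathsf{SELF}$ and invoking strong-SOS exactly where a single representative signal $2^{\ell-1}$ must stand in for the whole bucket $[2^{\ell-1}, 2^\ell)$.

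For ex-post IC-IR, condition on the realized $\ell$. Since $\bar v_{jT}$ for every $j \in N_{B_\ell}$ is computed with $j$'s own signal replaced by the constant $2^{\ell-1}$, the allocation depends on agent $i$'s report only through the indicator $\mathbf{1}[\tilde{s}_i \geq 2^{\ell-1}]$. An agent with true signal $s_i \geq 2^{\ell-1}$ gets the same outcome for every report at least $2^{\ell-1}$, and his utility $v_{i\bar T_i}(\mathbf{s}) - v_{i\bar T_i}(\mathbf{s}_{-i}, 2^{\ell-1}-1)$ is non-negative by monotonicity; this yields IR and discourages under-reports (which give utility $0$). An agent with $s_i < 2^{\ell-1}$ cannot profit from over-reporting, because the price $v_{i\bar T_i}(\mathbf{s}_{-i}, 2^{\ell-1}-1)$ is at least $v_{i\bar T_i}(\mathbf{s})$ by monotonicity. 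This is exactly the argument used for \texttt{Random Threshold}.

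For the approximation, the key claim is that conditional on $\ell$, the mechanism attains welfare at least $W_\ell := \sum_{i:\, 2^{\ell-1} \leq s_i < 2^\ell} v_{iT_i^*}(\mathbf{0}_{-i}, 2^{\ell-1}_i)$. Since $\{T_i^*\}_{i \in N_{B_\ell}}$ is a feasible $N_{B_\ell}$-allocation (a subfamily of the feasible $T^*$), the $\bar v$-optimal choice $\bar T$ of the mechanism satisfies $\sum_{i \in N_{B_\ell}} \bar v_{i\bar T_i} \geq \sum_{i \in N_{B_\ell}} \bar v_{iT_i^*} \geq \sum_{i:\, 2^{\ell-1} \leq s_i < 2^\ell} v_{iT_i^*}(\mathbf{0}_{-i}, 2^{\ell-1}_i) = W_\ell$, where the middle step drops non-negative terms for higher buckets and uses monotonicity to shrink the background signals $(\mathbf{s}_{N_{\neg B_\ell}}, \mathbf{2^{\ell-1}}_{N_{B_\ell}\setminus\{i\}})$ down to $\mathbf{0}_{-i}$. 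The actual welfare of $\bar T$ is at least $\sum_{i \in N_{B_\ell}} \bar v_{i\bar T_i}$ because every winner has $s_i \geq 2^{\ell-1}$, so $v_{i\bar T_i}(\mathbf{s}) \geq \bar v_{i\bar T_i}$ by monotonicity. Averaging over a uniform $\ell \in \{1,\dots,\log_2 k\}$ and chaining with Equation~\eqref{eq:selfbound_strong_submod} yields the factor $2\log_2 k$.

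The main obstacle, and the only place where strong-SOS (as opposed to plain SOS) is essential, is the factor of two implicit in Equation~\eqref{eq:selfbound_strong_submod}. Applying the strong-SOS definition with $\mathbf{s}'_{-i} = \mathbf{s}_{-i} = \mathbf{0}_{-i}$, $s'_i = 0$, $s_i = 2^{\ell-1}$, and $\delta = 2^{\ell-1}$ yields the concavity-in-own-signal inequality
\[
v_{iT}(\mathbf{0}_{-i}, 2^{\ell-1}_i) - v_{iT}(\mathbf{0}) \;\geq\; v_{iT}(\mathbf{0}_{-i}, 2^\ell_i) - v_{iT}(\mathbf{0}_{-i}, 2^{\ell-1}_i),
\]
so that $v_{iT}(\mathbf{0}_{-i}, s_i) \leq v_{iT}(\mathbf{0}_{-i}, 2^\ell_i) \leq 2\, v_{iT}(\mathbf{0}_{-i}, 2^{\ell-1}_i)$ for every $s_i < 2^\ell$. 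This converts $\sum_\ell W_\ell$ into a $2$-approximation of $\mathsf{SELF}$; combined with the $1/\log_2 k$ loss from the uniform random bucket, the mechanism's expected welfare is at least $\mathsf{SELF}/(2\log_2 k)$, as claimed.
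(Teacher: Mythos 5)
Your proof is correct and follows essentially the same route as the paper's (bucketing, threshold-style IC-IR, and the $\bar v$-optimality chain). You are actually more careful than the paper at the one subtle step: the paper's Equation~\eqref{eq:selfbound_strong_submod} appears to be missing a factor of $2$ on the right-hand side (compare to the $d$-SOS version in the appendix, where it reads $d(d+1)$), and your explicit derivation of $v_{iT}(\mathbf{0}_{-i}, s_i) \le 2\, v_{iT}(\mathbf{0}_{-i}, 2^{\ell-1}_i)$ from the strong-SOS concavity condition is exactly the argument that justifies the stated $2\log_2 k$ bound.
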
	
\begin{proof}
	The proof of ex-post IC-IR is identical to that of mechanism \texttt{Random Threshold}, as both are threshold-based mechanisms. The proof of the approximation guarantee is also very similar to that of \texttt{Random Threshold}.
	
	Consider a threshold $2^{\ell-1}$ for $\ell\in \{1,\ldots,k-1\}$ chosen in \texttt{Random Bucket}. Whenever $\ell$ is chosen, we have that
	\begin{eqnarray*}
		\sum_{i\ :\ 2^{\ell-1}\leq s_i < 2^\ell} \bar{v}_{iT_i^*}  \ = \  \sum_{i\ :\ 2^{\ell-1}\leq s_i < 2^\ell} v_{iT_i^*}(\mathbf{s}_{N_{\neg B_\ell}}	,\mathbf{2^{\ell-1}}_{N_{B_\ell}}) \ \geq\ \sum_{i\ :\ 2^{\ell-1}\leq s_i < 2^\ell} v_{iT_i^*}(\mathbf{0}_{-i},{2^{\ell-1}}_i).
	\end{eqnarray*}
	Since \texttt{Random Bucket} chooses an allocation that maximizes the $\bar{v}_{iT}$'s, the value of the allocation is only larger. Because $v_{i\bar{T}_i}(\mathbf{s})\geq \bar{v}_{i\bar{T}_i}$, we get that if $\ell$ was chosen, which happens with probability $\frac{1}{\log_2{k}}$, the welfare achieved is at least $\sum\limits_{i\ :\ 2^{\ell-1}\leq s_i < 2^\ell} v_{iT_i^*}(\mathbf{0}_{-i},s_i).$ Therefore, the welfare from running \texttt{Random Bucket} is at least
	$$\sum_{\ell=1}^{\log_2{k}} \frac{1}{\log_2{k}}\sum_{i\ :\ 2^{\ell-1}\leq s_i < 2^\ell} v_{iT_i^*}(\mathbf{0}_{-i},s_i),\geq \frac{\mathsf{SELF}}{2\log_2{k}}.$$
	
\end{proof}

Mechanism \texttt{$k$-signals Strong-SOS} (\texttt{$k$-SS}) runs \texttt{Random Bucket} with probability $p_{RB}=\frac{log_2{k}}{log_2{k}+2}$ 
and mechanism \texttt{Random Sampling} with probability $1-p_{RB}$.

\begin{theorem}	\label{thm:k-sig-ssm}
	For every combinatorial auction with single-dimensional signals with strong-SOS valuations and signal space of size $k$, i.e. $s_i \in \{0, 1,\ldots, k-1\} \, \forall i$, mechanism \texttt{$k$-SS} is ex-post IC-IR, and gives $(2\log_2{k}+4)$-approximation to the optimal social welfare.
\end{theorem}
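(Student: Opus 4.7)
The plan is to prove Theorem~\ref{thm:k-sig-ssm} by essentially cut-and-pasting the structure of the proof of Theorem~\ref{thm:k-sig-sm}, with \texttt{Random Bucket} playing the role that \texttt{Random Threshold} played there. The heavy lifting has already been packaged into Lemma~\ref{lem:self_cover_strongly_submod} (where the exponential bucketing and the strong-SOS property are actually used) and Lemma~\ref{lem:other_cover}; what remains is really just to combine them and tune the mixing probability $p_{RB}$.

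For ex-post IC-IR I would note that \texttt{k-SS} is by definition a randomization over two mechanisms, each of which is in turn a distribution over deterministic mechanisms. Every deterministic mechanism in the support of \texttt{k-SS} is therefore either in the support of \texttt{Random Bucket} or in the support of \texttt{Random Sampling}. The former is ex-post IC-IR by Lemma~\ref{lem:self_cover_strongly_submod}; the latter is ex-post IC-IR because agents in $B$ (the only agents who can be allocated) have their signals ignored in the computation of the $\tilde v_{iT}$'s and pay nothing. Hence every deterministic mechanism in the support of \texttt{k-SS} is ex-post IC-IR, which is precisely universal ex-post IC-IR.

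For the approximation ratio, I would first invoke the decomposition
\[
W^*(\bs) \;\le\; \mathsf{OTHER} + \mathsf{SELF}
\]
from Equation~\eqref{eq:decomposition}. This decomposition relies only on Equation~\eqref{eq:decomp-sm}, which holds under SOS and a fortiori under strong-SOS, so it transfers here without change. Applying Lemma~\ref{lem:other_cover}, \texttt{Random Sampling} achieves expected welfare at least $\mathsf{OTHER}/4$; applying Lemma~\ref{lem:self_cover_strongly_submod}, \texttt{Random Bucket} achieves expected welfare at least $\mathsf{SELF}/(2\log_2 k)$. Since \texttt{k-SS} runs the former with probability $1-p_{RB}$ and the latter with probability $p_{RB}$, its expected welfare is at least
\[
p_{RB}\cdot\frac{\mathsf{SELF}}{2\log_2 k} \;+\; (1-p_{RB})\cdot\frac{\mathsf{OTHER}}{4}.
\]

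The last step is simply to balance the two coefficients. Setting $p_{RB} = \tfrac{\log_2 k}{\log_2 k + 2}$ makes both equal to $\tfrac{1}{2\log_2 k + 4}$, so the bound collapses to
\[
\frac{\mathsf{SELF}+\mathsf{OTHER}}{2\log_2 k + 4} \;\ge\; \frac{W^*(\bs)}{2\log_2 k + 4},
\]
which is exactly the claimed approximation. I do not foresee a real obstacle here: all of the interesting ideas (random partitioning against \textsf{OTHER}, exponential-scale threshold sampling against \textsf{SELF} under strong-SOS) have already been isolated in the two supporting lemmas, so the proof of the theorem itself is just a calibration calculation.
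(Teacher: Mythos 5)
Your proposal is correct and follows the paper's proof essentially line for line: both use the $\mathsf{OTHER}+\mathsf{SELF}$ decomposition of Equation~\eqref{eq:decomposition}, invoke Lemma~\ref{lem:other_cover} for the $\mathsf{OTHER}$ term and Lemma~\ref{lem:self_cover_strongly_submod} for the $\mathsf{SELF}$ term, and balance with $p_{RB}=\frac{\log_2 k}{\log_2 k+2}$. Your remark that every deterministic mechanism in the support of \texttt{k-SS} lies in the support of \texttt{Random Bucket} or \texttt{Random Sampling} is a slightly more careful phrasing of the universal ex-post IC-IR step than the paper gives, but it is the same argument.
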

\begin{proof}
	We already established that both \texttt{Random Bucket} and \texttt{Random Sampling} are ex-post IC-IR, hence  \texttt{$k$-SS} is ex-post IC-IR as well. As for the approximation, according to Lemma~\ref{lem:self_cover_strongly_submod}, with probability $p_{RB}$ we get $2\log_2{k}$-approximation 
	to \textsf{SELF}, and according to Lemma~\ref{lem:other_cover}, with probability $1-p_{RB}$ we get a $4$-approximation 
	to \textsf{OTHER}. Overall, the expected welfare is at least
	\begin{eqnarray*}
		p_{RB}\frac{\mathsf{SELF}}{2\log_2{k}}+(1-p_{RB})\frac{\mathsf{OTHER}}{4} & = & \frac{\mathsf{SELF} + \mathsf{OTHER}}{2\log_2{k}+4}\\
		& \geq & \frac{W^*}{2\log_2{k}+4},
	\end{eqnarray*}
	as desired.
	
\end{proof}

\section{Open Problems}
\label{sec:open-problems}

Our analysis and results suggest many open problems:
\begin{itemize}
\item For combinatorial auctions with multi-dimensional signals: is separability a necessary condition for achieving constant approximation to welfare?
This problem is open even for single-dimensional signals, and even for ``simple" combinatorial valuations, such as unit-demand.
\item For single-parameter SOS valuations, downward closed feasibility, and single-dimensional signals, closing the gap between $1/4$ and $1/2$ is open.
\item The exact same gap applies for combinatorial, separable-SOS valuations with multi-dimensional signals.
\item How does the distinction between SOS and strong-SOS affect the problems above, if at all?
\item When considering the relaxation of SOS valuations to $d$-SOS valuations, there is a gap between the positive and negative results
with respect to the dependence on $d$.
\end{itemize}
More generally, what other classes of valuations give rise to approximately efficient mechanisms in settings with interdependent valuations?

\vspace{0.1in}
\par\noindent{\bf Acknowledgements} 
We gratefully thank an anonymous referee who pointed out that many of the proofs in this paper, hold, with minor adjustments, for subadditive over signals valuations. Surprisingly, submodular over signals valuations are not a special case of subadditive over signals valuations. However, strong submodular over signals valuations are so. The actual situation is rather subtle and we will address this issue in a subsequent version of this paper.

\bibliographystyle{plainnat}
\bibliography{main}

\medskip

\appendix

\section{Unit-Demand Valuations with Single-Crossing}
\label{sec:ud-lb}
Whereas single-crossing is a strong enough condition to implement the fully efficient mechanism in a variety of single-parameter environments, generalizations of this condition fail even in the simplest multi-parameter environments. We consider the case where bidders are unit demand and each bidder has a scalar as a signal. We define single-crossing for this setting as follows.

\begin{definition}[Single-crossing for unit-demand valuations]
	A valuation profile $\mathbf{v}$ is said to be single crossing if for every agent $i$, signals $\mathbf{s}_{-i}$, item $j$ and agent $\ell$,
	\begin{eqnarray}
		\frac{\partial}{\partial s_i} v_{ij}(\mathbf{s}_{-i}, s_i) \geq \frac{\partial}{\partial s_i} v_{\ell j}(\mathbf{s}_{-i}, s_i). \label{eq:ud-sc}
	\end{eqnarray}
\end{definition}

In this section, we show that in the case two non-identical items are for sale, and the valuations are unit demand and satisfy single-crossing as defined in Equation~\eqref{eq:ud-sc}, any truthful mechanism is bounded away from achieving full efficiency.

In order to give the lower bound, we first give a characterization of ex-post IC and IR mechanisms in multi-dimensional environments in interdependent values settings (Section~\ref{sec:cycle-mon}). We then turn to prove the lower bound (Section~\ref{sec:ud-lb-proof}). 

\subsection{Cycle Monotonicity}
\label{sec:cycle-mon}
In the IPV model, \cite{rochet1987necessary} introduced cycle monotonicity as a necessary and sufficient condition on the allocation to be implementable in dominant strategies (DSIC) for multidimensional environments. It was noticed that a straightforward analogue holds for the IDV value model, for ex-post implementability (EPIC) (in  \cite{vohra2007paths}, this fact is stated without a proof).

Fix a feasible allocation rule $\mathbf{x}=\{x_i\}_{i\in[n]}$, where $x_{iT}(\mathbf{s})$ is the probability agent $i$ receives a bundle $T$ under bid profile $\mathbf{s}$. For each agent $i$, consider the graph $G^{\mathbf{x}}_i$ where there is a vertex for each signal profile $\mathbf{s}$, and there is a directed edge from $\mathbf{s}$ to $\mathbf{t}$ if $\mathbf{s}_{-i}=\mathbf{t}_{-i}$. The weight of edge $(\mathbf{s},\mathbf{t})$ is
$$
w(\mathbf{s},\mathbf{t}) = \E_{T\sim x_i(\mathbf{s})}[v_{iT}(\mathbf{s})]-\E_{T\sim x_i(\mathbf{t})}[v_{iT}(\mathbf{s})] =
\sum_{T \subseteq [m]} x_{iT}(\mathbf{s})v_{iT}(\mathbf{s}) - \sum_{T \subseteq [m]} x_{iT}(\mathbf{t})v_{iT}(\mathbf{s}).
$$ 

The following theorem states that a necessary and sufficient condition for ex-post implementability of $\mathbf{x}$ is that for every agent $i$, every directed cycle in $G^{\mathbf{x}}_i$ is non-negative.
The proof is a straightforward adjustment of the original proof in \cite{rochet1987necessary}, and is given below  
for completeness.

\begin{theorem}\label{thm:cycle-mon}
	The allocation rule $\mathbf{x}$ is implementable by an ex-post IC mechanism if and only if for every agent $i$, all directed cycles in  $G^{\mathbf{x}}_i$ have non-negative weight.
\end{theorem}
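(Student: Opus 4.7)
The proof will follow the classical Rochet-style argument, adapted to the interdependent-values ex-post setting. The key observation is that, once $\mathbf{s}_{-i}$ is fixed, ex-post IC for agent $i$ reduces to a Rochet-type implementability condition on the set of reports with those particular other signals; the graph $G_i^{\mathbf{x}}$ is tailored precisely to encode these constraints, with its edge weight $w(\mathbf{s},\mathbf{t})$ chosen so that the inequality $p_i(\mathbf{t}) - p_i(\mathbf{s}) \geq -w(\mathbf{s},\mathbf{t})$ is equivalent to ``agent $i$ with true signal $\mathbf{s}$ (weakly) prefers reporting $\mathbf{s}$ to reporting $\mathbf{t}$, given that $\mathbf{s}_{-i}$ is held fixed.''

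For the necessity direction, I will take an ex-post IC mechanism $(\mathbf{x},\mathbf{p})$ and write the IC inequality between truthful reporting of $\mathbf{s}$ and deviating to $\mathbf{t}$, where $\mathbf{s}_{-i} = \mathbf{t}_{-i}$. A straightforward rearrangement gives $p_i(\mathbf{t}) - p_i(\mathbf{s}) \geq -w(\mathbf{s},\mathbf{t})$ for every edge of $G_i^{\mathbf{x}}$. Summing this inequality around any directed cycle $\mathbf{s}^0 \to \mathbf{s}^1 \to \cdots \to \mathbf{s}^k = \mathbf{s}^0$ causes the payment terms to telescope to zero, leaving $\sum_{j} w(\mathbf{s}^j,\mathbf{s}^{j+1}) \geq 0$, as required.

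For sufficiency, the plan is to construct payments via shortest paths. Observe that $G_i^{\mathbf{x}}$ decomposes into connected components, one per value of $\mathbf{s}_{-i}$, and that agent $i$'s payment may depend on $\mathbf{s}_{-i}$. Within a fixed component, pick a reference signal $s_i^{*}$ and set $\mathbf{s}^{*} = (s_i^{*}, \mathbf{s}_{-i})$. The non-negative-cycle hypothesis ensures that the shortest-path distance $d_i(\mathbf{s}^{*}, \mathbf{s})$ in that component (with edge weights $w$) is well-defined and finite, and I will define $p_i(\mathbf{s}) := -d_i(\mathbf{s}^{*}, \mathbf{s})$. The standard triangle inequality $d_i(\mathbf{s}^{*},\mathbf{t}) \leq d_i(\mathbf{s}^{*},\mathbf{s}) + w(\mathbf{s},\mathbf{t})$ for shortest paths rearranges to exactly the IC inequality $p_i(\mathbf{t}) - p_i(\mathbf{s}) \geq -w(\mathbf{s},\mathbf{t})$, which gives ex-post IC.

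The only subtlety, and the main (mild) obstacle, is when the signal space is uncountable: ``shortest path'' must then be interpreted as an infimum over finite directed paths in the component, and one must verify that this infimum is attained (or at least finite) and that the triangle inequality still holds, exactly as in Rochet's original proof. A bounded value assumption, or alternatively the fact that adding any cycle to a path weakly increases its length (by the non-negative cycle hypothesis), is enough to reduce the computation to simple (acyclic) paths and hence make the construction rigorous. Ex-post IR, if desired, can be enforced separately by subtracting a constant from $p_i$ within each component.
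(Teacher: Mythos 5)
Your proof is correct and follows essentially the same Rochet-style argument as the paper: the necessity direction is identical (telescoping IC inequalities around a cycle), and for sufficiency you define payments as the negative of shortest-path distances from a fixed reference signal $\mathbf{s}^*$ within each $\mathbf{s}_{-i}$-component, which is a minor variant of the paper's construction that attaches a dummy source node $d$ with zero-weight edges to every node and sets $p_i(\mathbf{s}) = -\delta(d,\mathbf{s})$. Both constructions yield IC via the same shortest-path triangle inequality; the paper's dummy-node version has the small side benefit of automatically producing nonnegative payments, while your discussion of the infinite/uncountable signal-space subtlety is a careful addition the paper leaves implicit.
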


\begin{proof}
We first show that if the allocation rule is implementable, then there are no negative cycles. Fix some payment rule $\mathbf{p}=\{p_i\}_{i\in [n]}$, where $p_i(\mathbf{s})$ is the payment of agent $i$ under bid profile $\mathbf{s}$. 
Let $\mathbf{s}_{-i}$ be the real signals of all bidders except $i$, and consider a cycle $\mathbf{s}^1\rightarrow \mathbf{s}^2\rightarrow\ldots\rightarrow \mathbf{s}^\ell\rightarrow\mathbf{s}^1$ in $G^{\mathbf{x}}_i$, where $\mathbf{s}^t=(\mathbf{s}_{-i},s_i=\zeta_t)$ for $t\in [\ell]$. 
Since $(\mathbf{x},\mathbf{p})$ is an ex-post IC mechanism, for every true signal $s_i=s$, agent $i$ is at least as well off bidding $s$ than any other bid $s'$. We get that
	\begin{eqnarray*}
		\E_{T\sim x_i(\mathbf{s}^1)}[v_{iT}(\mathbf{s}^1)]-p_i(\mathbf{s}^1)&\geq &\E_{T\sim x_i(\mathbf{s}^2)}[v_{iT}(\mathbf{s}^1)]-p_i(\mathbf{s}^2)\\
		&\vdots&\\
		\E_{T\sim x_i(\mathbf{s}^{\ell-1})}[v_{iT}(\mathbf{s}^{\ell-1})]-p_i(\mathbf{s}^{\ell-1})&\geq &\E_{T\sim x_i(\mathbf{s}^\ell)}[v_{iT}(\mathbf{s}^{\ell-1})]-p_i(\mathbf{s}^\ell)\\
		\E_{T\sim x_i(\mathbf{s}^{\ell})}[v_{iT}(\mathbf{s}^{\ell})]-p_i(\mathbf{s}^{\ell})&\geq &\E_{T\sim x_i(\mathbf{s}^1)}[v_{iT}(\mathbf{s}^{\ell})]-p_i(\mathbf{s}^1)
	\end{eqnarray*}

	Summing over the above inequalities and using the convention that $\ell+1=1$, we get that
	\begin{eqnarray*}
		& &\sum_{j=1}^\ell \E_{T\sim x_i(\mathbf{s}^j)}[v_{iT}(\mathbf{s}^j)] - \sum_{j=1}^\ell p_i(\mathbf{s}^j) \geq \sum_{j=1}^\ell \E_{T\sim x_i(\mathbf{s}^{j+1})}[v_{iT}(\mathbf{s}^j)] - \sum_{j=1}^\ell p_i(\mathbf{s}^j)\\
		& \iff& \sum_{j=1}^\ell \left(\E_{T\sim x_i(\mathbf{s}^j)}[v_{iT}(\mathbf{s}^j)] - \E_{T\sim x_i(\mathbf{s}^{j+1})}[v_{iT}(\mathbf{s}^j)]\right)\geq 0,
	\end{eqnarray*}
	where the LHS of the last inequality is exactly the weight of the cycle.

	We now show how to compute payments that implement a given allocation rule $\mathbf{x}$ that induces no negative cycles for any $i$ and $G_i^{\mathbf{x}}$. Given $G_i^{\mathbf{x}}$, one can compute payments as follows.
	\begin{itemize}
		\item Add a dummy node $d$ with edges of weight~0 to all nodes in $G_i^{\mathbf{x}}$.
		\item For every node $\mathbf{s}$ of $G_i^{\mathbf{x}}$, let $\delta(\mathbf{s})$ be the distance of the shortest path from $d$ to $\mathbf{s}$.
		\item Set $p_i(\mathbf{s})= -\delta(\mathbf{s})$.
	\end{itemize}
	Fix signals of the other players $\mathbf{s}_{-i}$. Let $s$ be player $i$'s true signal and $s'$ be some other possible signal for $i$. Denote $\mathbf{s}=(\mathbf{s}_{-i},s)$ and $\mathbf{s'}=(\mathbf{s}_{-i},s')$. Consider the nodes $\mathbf{s}$ and $\mathbf{s'}$ in $G_i^{\mathbf{x}}$. Since $\delta(\mathbf{s'})$ is the length of the shortest path from $d$, it must be that $$\delta(\mathbf{s'}) \leq \delta(\mathbf{s}) + w(\mathbf{s},\mathbf{s'}),$$ where $w(\mathbf{s},\mathbf{s'})$ is the weight of the edge from $\mathbf{s}$ to $\mathbf{s'}$. Substituting $w(\mathbf{s},\mathbf{s'}) = \E_{T\sim x_i(\mathbf{s})}[v_{iT}(\mathbf{s})]-\E_{T\sim x_i(\mathbf{s'})}[v_{iT}(\mathbf{s})]$, $p_i(\mathbf{s})=-\delta(\mathbf{s})$, and $p_i(\mathbf{s'})=-\delta(\mathbf{s'})$, we get
	$$\E_{T\sim x_i(\mathbf{s})}[v_{iT}(\mathbf{s})]-p_i(\mathbf{s})\geq \E_{T\sim x_i(\mathbf{s'})}[v_{iT}(\mathbf{s})]- p_i(\mathbf{s'}),$$
	 as desired.	
\end{proof} 

\subsection{Lower Bounds for Deterministic and Randomized Mechanisms}\label{sec:ud-lb-proof}

\begin{lemma} \label{lem:sclowertwodet}
There exists a setting with two items and two agents with unit-demand and single crossing valuations, such that no deterministic truthful mechanism achieves more than $1/2$ of the optimal welfare.
\end{lemma}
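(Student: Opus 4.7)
The plan is to exhibit a concrete two-agent two-item instance whose valuations satisfy the single-crossing condition in Equation~\eqref{eq:ud-sc}, together with two distinguished signal profiles $\mathbf{s}^1$ and $\mathbf{s}^2$ on which the unique welfare-maximizing matchings are opposite (one requires $1\!\to\!A, 2\!\to\!B$ and the other $1\!\to\!B, 2\!\to\!A$), both of welfare $W$. The instance will be engineered so that for some agent $i$, the pair of profiles $\mathbf{s}^1$ and $\mathbf{s}^2$ lie on the same fiber of $G^{\mathbf{x}}_i$ (they differ only in $s_i$), and implementing both optimal matchings simultaneously produces a strictly negative 2-cycle. By Theorem~\ref{thm:cycle-mon}, no ex-post IC mechanism can then realize the optimal matching at both profiles.

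Given this obstruction, I will proceed by a small case analysis: the deterministic allocation at each of $\mathbf{s}^1,\mathbf{s}^2$ is one of a handful of possibilities (assign $A$, assign $B$, or assign nothing to each agent, subject to feasibility). I will enumerate the joint choices, ruling out (a) the two that match the optimum at both profiles, by the negative-cycle argument, and (b) any remaining candidate that beats $W/2$ at both profiles, by a direct welfare calculation against the valuations. The construction will be tuned so that every surviving implementable allocation rule necessarily leaves welfare at most $W/2$ at one of $\mathbf{s}^1,\mathbf{s}^2$, yielding the lemma.

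The concrete valuations I have in mind take the schematic form $v_{iT}(\mathbf{s})=c_{iT}+\sum_j a^{iT}_j s_j$ with the coefficients $a^{iT}_j$ chosen to satisfy $a^{iT}_i\geq a^{\ell T}_i$ for every $\ell\neq i$ (so that single-crossing holds), while the constants $c_{iT}$ and linear slopes are selected so that (i) the sign of $v_{1A}+v_{2B}-v_{1B}-v_{2A}$ flips between $\mathbf{s}^1$ and $\mathbf{s}^2$, and (ii) the difference $v_{1A}(\mathbf{s}^1)-v_{1B}(\mathbf{s}^1)-(v_{1A}(\mathbf{s}^2)-v_{1B}(\mathbf{s}^2))$ is strictly positive, which is precisely what makes the 2-cycle for agent $1$ strictly negative when her allocation flips from $A$ at $\mathbf{s}^1$ to $B$ at $\mathbf{s}^2$.

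The main obstacle will be the calibration of the constants and slopes: single-crossing requires that within each item, the agent whose signal is being varied has the weakly larger marginal, and this limits how aggressively one can separate the two matchings' welfares. The trick will be to push the distinguishing contribution into constants (which are unconstrained by SC) and into cross-agent effects that respect the per-item marginal inequality, so that the welfare gap between the optimal matching and the best implementable alternative is exactly a factor of two. Once the example is verified, the lemma follows by observing that any ex-post IC-IR deterministic mechanism must select one of the enumerated cases and therefore achieves at most $W/2$ at $\mathbf{s}^1$ or at $\mathbf{s}^2$.
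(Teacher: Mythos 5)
Your plan matches the paper's strategy at every structural step: pick a concrete two-item, two-agent unit-demand instance satisfying the single-crossing condition of Equation~\eqref{eq:ud-sc}, exhibit two signal profiles on the same fiber of $G^{\mathbf{x}}_1$ whose welfare-maximizing matchings are opposite, observe that implementing both creates a negative $2$-cycle so by Theorem~\ref{thm:cycle-mon} no ex-post IC mechanism can do so, and then argue by enumeration that any implementable alternative loses a factor of two at one of the two profiles. So the approach is essentially the same as the paper's. Two points deserve care when you carry it out. First, your sign condition is reversed: with $x_1(\mathbf{s}^1)=A$ and $x_1(\mathbf{s}^2)=B$, the $2$-cycle weight is exactly $\bigl[v_{1A}(\mathbf{s}^1)-v_{1B}(\mathbf{s}^1)\bigr]-\bigl[v_{1A}(\mathbf{s}^2)-v_{1B}(\mathbf{s}^2)\bigr]$, so you need that quantity to be strictly \emph{negative}, i.e.\ $v_{1A}-v_{1B}$ must strictly \emph{increase} from $\mathbf{s}^1$ to $\mathbf{s}^2$, not the other way around. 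Second, your proposed symmetry in which both optimal matchings have the same welfare $W$ will make the calibration delicate: ruling out all of the non-optimal fallback allocations at both profiles while keeping the ratio near exactly $1/2$ is not automatic. The paper sidesteps this by using an asymmetric instance in which only agent $1$ carries a (binary) signal and the optimal welfares are $2$ and $2H$ for large $H$; at $s_1=0$ any rule beating $1/2$ is forced to allocate $(a,b)$, and cycle monotonicity then caps the welfare at $s_1=1$ at $H+2+\epsilon$ versus optimum $2H$, giving ratio $\to 1/2$. So your plan is on the right track, but the concrete instance (which is the entire content of the lemma) still needs to be supplied, and the easiest route is the paper's asymmetric construction rather than a symmetric one.
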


\begin{figure}[h!]
\centering
\includegraphics[scale=.5]{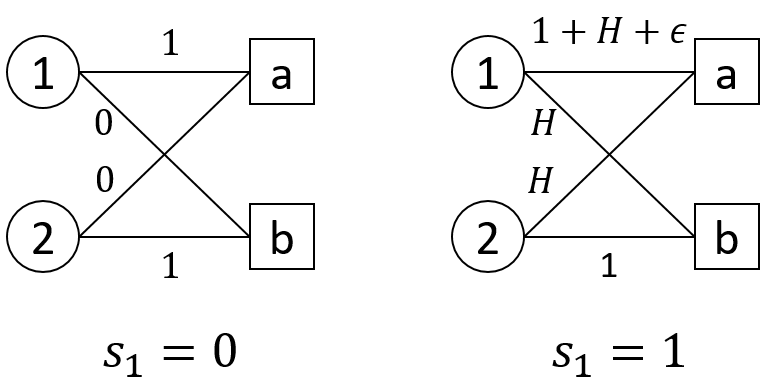}
\caption{An instance with unit-demand single-crossing valuations where no deterministic truthful allocation achieves more than a half of the optimal welfare.}
\centering
\label{fig:2-lb}
\end{figure}

\begin{proof}
	Consider the setting depicted in Figure \ref{fig:2-lb}, with two agents, 1 and~2, and two items, $a$ and $b$. $s_1\in\{0,1\}$ and $s_2$ is fixed.
	The values at $s_1=0$ are $$v_{1a}(0)=1, v_{1b}(0)=0, v_{2a}(0)=0, v_{1b}(0)=1,$$ and at $s_1=1$ are
	$$v_{1a}(1)=1+H+\epsilon, v_{1b}(1)=H, v_{2a}(1)=H, v_{1b}(1)=1,$$
	for some arbitrarily large $H$ and a sufficiently small $\epsilon$.
	One can easily verify that the valuations satisfy Equation~\eqref{eq:ud-sc}, and hence single crossing; indeed, when agent~1's signal increases, the valuation of agent~1 for each one of the item increases by more than the change in agent 2's valuation.
	
We show that no deterministic truthful mechanism can get better than 2-approximation. In order to get better than 2-approximation, the mechanism must allocate item $a$ to agent~1 and item $b$ to bidder~2 at signal $s_1=0$. At $s_1=1$, allocating item $b$ to agent~1 and item $a$ to agent~2 obtains a welfare of $2H$, while any other allocation obtains at most a welfare of $H+2+\epsilon$. Since $H$ can be arbitrarily large, one must allocate item $b$ to agent~1 and item $a$ to agent~2 at signal $s_1=1$ in order to get an approximation ratio better than~2. Consider such an allocation rule $\mathbf{x}$, and the graph $G_1^\mathbf{x}$. This graph has one cycle, with one edge from $s_1=0$ to $s_1=1$ and one edge from $s_1=1$ to $s_1=0$. The weight of this cycle is $$(v_{1a}(0)-v_{1b}(0))+(v_{1b}(1)-v_{1a}(1)) = (1-0)+(H-(H+1+\epsilon))=-\epsilon < 0.$$
Based on Theorem~\ref{thm:cycle-mon}, this implies that this allocation rule is not implementable
\end{proof}

\begin{figure}[h!]
\centering
\includegraphics[scale=.5]{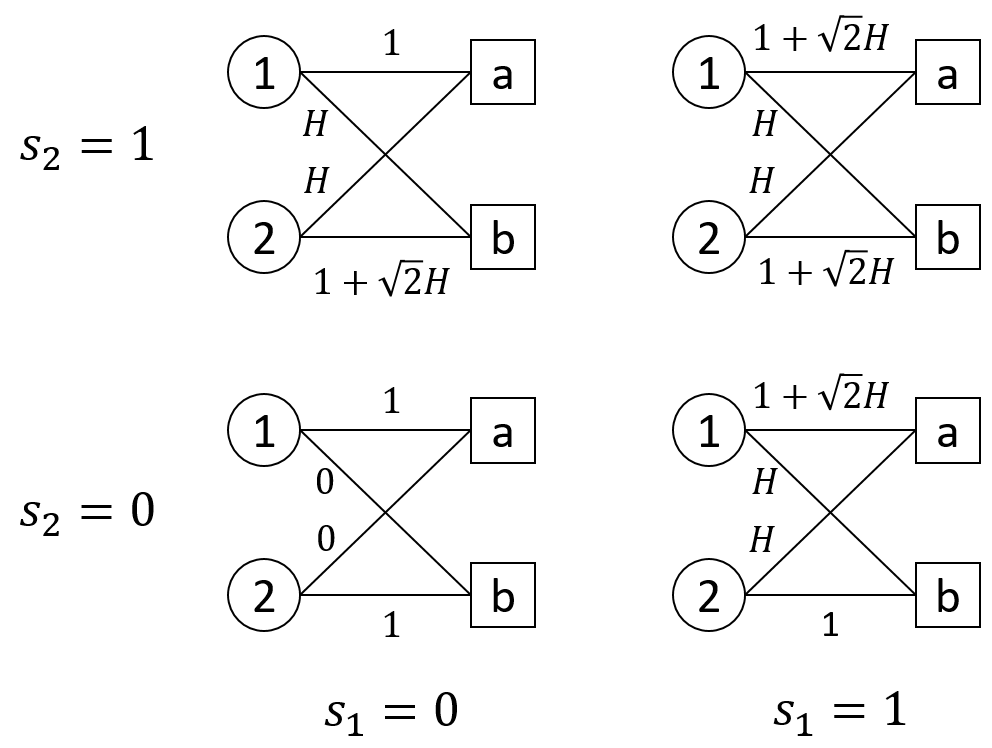}
\caption{An instance with unit-demand single-crossing valuations where no randomized truthful allocation achieves more than $\frac{\sqrt{2}+2}{4}$ of the optimal welfare.}
\centering
\label{fig:lb-random-ud}
\end{figure}

\begin{lemma}\label{lem:sclowertworan}
There exists a setting with two items and two agents with unit-demand and single crossing valuations, such that no randomized truthful mechanism achieves more than $\frac{\sqrt{2}+2}{4}$ of the optimal welfare.
\end{lemma}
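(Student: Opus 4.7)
My plan is to exhibit the concrete instance shown in Figure~\ref{fig:lb-random-ud} and apply the cycle-monotonicity characterization of ex-post IC randomized mechanisms (Theorem~\ref{thm:cycle-mon}) to upper-bound the welfare achievable by any such mechanism. The first step is to invoke Theorem~\ref{thm:cycle-mon} in the randomized setting: the expected marginal allocation probabilities $\{x_{iT}(\mathbf{s})\}$ of any ex-post IC randomized mechanism must satisfy the cycle-monotonicity condition on each agent's signal graph. In the two-agent instance where only agent~$1$ has a binary signal $s_1\in\{0,1\}$, this reduces to the single inequality $(x_{1a}(1)-x_{1a}(0))(v_{1a}(1)-v_{1a}(0)) + (x_{1b}(1)-x_{1b}(0))(v_{1b}(1)-v_{1b}(0)) \ge 0$, analogous to the binding obstruction that drove Lemma~\ref{lem:sclowertwodet}.

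The second step is to set up the resulting linear program. Feasibility (unit-demand on agents and a single copy per item) gives the standard polytope $x_{iT}(\mathbf{s})\ge 0$, $\sum_i x_{iT}\le 1$ for each item, and $\sum_T x_{iT}\le 1$ for each agent, while the welfare $W(\mathbf{s})$ is linear in $\{x_{iT}(\mathbf{s})\}$. For the valuations of Figure~\ref{fig:lb-random-ud}, only $x_{1a}(0)$ and $x_{2b}(0)$ contribute to welfare at $s_1=0$ (the remaining valuations vanish there), so $W(0) = x_{1a}(0) + x_{2b}(0)$ and taking $x_{2b}(0)=1$ is without loss of generality. At $s_1=1$ I eliminate agent~$2$'s variables by filling the items not taken by agent~$1$; this reduces the LP to the two free variables $\alpha_1 := x_{1a}(1)$ and $\beta_1 := x_{1b}(1)$, subject to $\alpha_1+\beta_1\le 1$ and to the cycle-monotonicity inequality expressed in terms of $x_{1a}(0)$, $\alpha_1$ and $\beta_1$.

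The final step is to compute the best achievable trade-off between $W(0)/\mathrm{OPT}(0)$ and $W(1)/\mathrm{OPT}(1)$. I would parametrize by these two ratios, set both cycle-monotonicity and the feasibility line $\alpha_1+\beta_1=1$ at equality, and eliminate $\alpha_1,\beta_1$ to obtain a single equation in the common ratio $t$; plugging in the parameters of Figure~\ref{fig:lb-random-ud} yields $t = \frac{\sqrt{2}+2}{4}$. The main obstacle will be the LP analysis itself: verifying that the balanced-binding configuration really is the LP optimum, and ruling out alternative candidate structures (leaving $\alpha_1+\beta_1<1$ at $s_1=1$, assigning positive probability to zero-value items at $s_1=0$, or using slack in the cycle-monotonicity inequality) via a short case analysis, so that we can conclude that no randomized ex-post IC mechanism does better than $\frac{\sqrt{2}+2}{4}\cdot\mathrm{OPT}$ on this instance.
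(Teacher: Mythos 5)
There is a genuine gap in your proposal: you have misread the instance. The instance of Figure~\ref{fig:lb-random-ud} gives \emph{both} agents a binary signal, $s_1, s_2 \in \{0,1\}$, with valuations set up symmetrically so that each agent faces the same dilemma the other faces. Your plan treats it as if only agent~$1$ has a signal (as in the \emph{deterministic} lower bound instance of Figure~\ref{fig:2-lb}); under that reading the claim ``taking $x_{2b}(0)=1$ is without loss of generality'' is what lets you eliminate agent~$2$ from the LP. But that step is precisely what fails. With only one signal, agent~$2$'s allocation really is unconstrained by incentives, and a randomized mechanism can push $x_{1a}(0)$ up to $\frac{H}{H+\epsilon} \to 1$ while keeping $x_{1b}(1)=1$, $x_{2a}(1)=1$, achieving welfare arbitrarily close to optimal on both profiles; no nontrivial lower bound comes out. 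The hardness that produces $\frac{\sqrt{2}+2}{4}$ is created only because agent~$2$'s allocation at $(0,0)$ is itself subject to agent~$2$'s cycle-monotonicity constraint.

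The paper handles this with a symmetrization step you have omitted: since the valuation profile is invariant under swapping (agent~$1$, item~$a$, $s_1$) with (agent~$2$, item~$b$, $s_2$), one may average any ex-post-IC mechanism with its swapped mirror image to obtain a symmetric ex-post-IC mechanism with at least the same per-profile guarantee. After symmetrizing, $x_{1a}(0,0)=x_{2b}(0,0)=q$ are locked to each other, so the welfare at $(0,0)$ is $2q$ rather than $q+1$, and $q$ is bounded by the same cycle inequality $q \le p\left(1-\tfrac{1}{\sqrt{2}}\right) + \tfrac{1}{\sqrt{2}}$ that you would derive on the edge $(0,0)\rightarrow(1,0)\rightarrow(0,0)$. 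Balancing $q$ against the ratio $1 - p\left(1-\tfrac{1}{\sqrt{2}}\right)$ attained at $(1,0)$ gives $p=\tfrac{1}{2}$ and the bound $\tfrac{2+\sqrt{2}}{4}$. Your LP framework and use of cycle monotonicity are the right tools, but without the two-signal instance and the symmetrization argument the LP has optimum $1$, not $\tfrac{2+\sqrt{2}}{4}$.
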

\begin{proof}
Consider the setting depicted in Figure \ref{fig:lb-random-ud}, with two agents, 1 and~2, and two items, $a$ and $b$. $s_1\in\{0,1\}$ and $s_2\in \{0,1\}$.
	The values  are
	\begin{eqnarray*}
		v_{1a}(0,0)=1,\quad v_{1b}(0,0)=0,&\quad&  v_{2a}(0,0)=0,\quad  v_{1b}(0,0)=1,\\
		v_{1a}(1,0)=1+\sqrt{2}H,\quad  v_{1b}(1,0)=H,&\quad& v_{2a}(1,0)=H,\quad  v_{1b}(1,0)=1,\\
		v_{1a}(0,1)=1,\quad  v_{1b}(0,1)=H, &\quad&v_{2a}(0,1)=H,\quad  v_{2b}(0,1)=1+\sqrt{2}H,\\
		v_{1a}(1,1)=1+\sqrt{2}H,\quad  v_{1b}(1,1)=H, &\quad&v_{2a}(1,1)=H,\quad  v_{2b}(1,1)=1+\sqrt{2}H,
	\end{eqnarray*}
for an arbitrarily large $H$. One can easily verify that the valuations are single crossing.
We claim that the following equalities hold with respect to the allocation rule of the optimal randomized mechanism:
\begin{enumerate}[(a)]
	\item For every $s_1,s_2$, $x_{1a}(s_1,s_2)=x_{2b}(s_2,s_1)$ and $x_{2a}(s_1,s_2)=x_{1b}(s_2,s_1)$.
	\item For some $q\in [0,1]$, $x_{1a}(0,0)=x_{2b}(0,0)=q$ and $x_{1\emptyset}(0,0)=x_{2\emptyset}(0,0)=1-q$.
	\item For some $p\in [0,1]$, $x_{1a}(0,1)=p$ and $x_{1b}(0,1)=1-p$.
\end{enumerate}
We next prove the above equalities.
\begin{enumerate}[(a)]
	\item Consider some implementable allocation rule $\bar{x}$, and consider the allocation rule $\tilde{x}$ where $\tilde{x}_{1a}(s_1,s_2)=\bar{x}_{2b}(s_2,s_1)$ and $\tilde{x}_{2a}(s_1,s_2)=\bar{x}_{1b}(s_2,s_1)$ for every $s_1,s_2$.
Note that the valuations are symmetric; i.e., the role of item $a$ (resp. $b$) for agent~1 is the same as the role of items $b$ (resp. $a$) for agent~2.
By symmetry, $\bar{x}$ is implementable if and only if $\tilde{x}$ is implementable, and both allocation rules have the same approximation guarantee.
Clearly, an allocation rule $x$ that applies allocation rules $\bar{x}$ and $\tilde{x}$, with probability $\frac{1}{2}$ each, maintains the same approximation guarantee. Moreover, this allocation rule satisfies the desired property.
	\item The optimal mechanism gains nothing from assigning any positive probability for allocating item $b$ to agent 1 under signal profile $(0,0)$. This is because item $b$ grants no value to agent~1, and in terms of incentives, it can only incentivize agent 1 to misreport his signal at signal profile $(1,0)$. Analogously, the optimal mechanism gains nothing from assigning any positive probability for allocating item $a$ to agent 2 under signal profile $(0,0)$. By (a),  $x_{1a}(0,0)=x_{2b}(0,0)=q$ for some $q\in [0,1]$. To conclude the proof of (b), note that the only other feasible set for the agents is the empty set (otherwise, agent~1 has some probability to get item $b$ and agent~2 has some probability to get item $a$).
	\item Consider $G_1^{\mathbf{x}}$ and the cycle $C=(0,0)\rightarrow(1,0)\rightarrow (0,0)$ in $G_1^{\mathbf{x}}$. This is the only cycle that contains the node $(1,0)$ in $G_1^{\mathbf{x}}$.
	Assume $x_{1\emptyset}(1,0)>0$.
Transferring $z\in (0,1]$ probability from $x_{1\emptyset}(1,0)$ to $x_{1a}(1,0)$ decreases the weight of the edge $(0,0)\rightarrow(1,0)$ by $z$, and increases the weight of the edge $(1,0)\rightarrow(0,0)$ by $z(1+\sqrt{2} H)>z$. Therefore, its net effect on the weight of $C$ is positive.
Transferring $z\in (0,1]$ probability from $x_{1\emptyset}(1,0)$ to $x_{1b}(1,0)$ does not affect the weight of the edge $(0,0)\rightarrow(1,0)$, and increases the weight of the edge $(1,0)\rightarrow(0,0)$ by $zH$. Therefore, its net effect on the weight of $C$ is positive.
Since transferring $x_{1\emptyset}(1,0)$ to $x_{1a}(1,0)$ and $x_{1b}(1,0)$ increases welfare and does not violate cycle monotonicity, the optimal mechanism clearly assigns no probability to $x_{1\emptyset}(1,0)$.

Now assume $x_{1\{a,b\}}(1,0)>0$. By Moving this probability to $x_{1a}(1,0)$, we get the same expected welfare at $(1,0)$, and the weight of the edges in $C$ does not change. Therefore, we may also assume the mechanism does not assign positive utility to $x_{1\{a,b\}}(1,0)$.
\end{enumerate}

According to Theorem \ref{thm:cycle-mon}, in any truthful mechanism, the weight of the cycle $C$ must be non-negative . This translates to the following condition.
	\begin{eqnarray*}
		& &\left(\E_{T\sim x_1(0,0)}[v_{1T}(0,0)]-\E_{T\sim x_1(1,0)}[v_{1T}(0,0)]\right) - \left(\E_{T\sim x_1(1,0)}[v_{1T}(1,0)]-\E_{T\sim x_1(0,0)}[v_{1T}(1,0)]\right) \\
		& = &(q-p) \ +\ \left(p(1+\sqrt{2} H) + (1-p)H-q(1+\sqrt{2} H)\right) \geq 0\\
		& \Rightarrow & q \leq p\left( 1 - \frac{1}{\sqrt{2}} \right) + \frac{1}{\sqrt{2}}.
	\end{eqnarray*}
	In the optimal mechanism, $q$ will be as large as possible in order to maximize the expected welfare at signal profile $(0,0)$. Hence, we can assume $q = p\left( 1 - \frac{1}{\sqrt{2}} \right) + \frac{1}{\sqrt{2}}$. Therefore, the approximation ratio at profile $(0,0)$ is at most $q=p\left( 1 - \frac{1}{\sqrt{2}} \right) + \frac{1}{\sqrt{2}}$. At profile $(0,1)$, if item $a$ is allocated to agent~1 (which happens with probability $p$), the welfare of the mechanism is at most $2+\sqrt{2}H$, while the welfare of the optimal allocation is $2H$. As $H$ can be arbitrarily large, this approximation ratio tends to $\frac{1}{\sqrt{2}}$. Therefore, the approximation ratio at profile $(1,0)$ is at most $\frac{p}{\sqrt{2}} + (1-p) = 1-p\left(1-\frac{1}{\sqrt{2}}\right)$. The optimal mechanism would balance between the approximation ratio at $(0,0)$ and at $(1,0)$, therefore uses $p$ that solves
$$p\left( 1 - \frac{1}{\sqrt{2}} \right) + \frac{1}{\sqrt{2}}=1-p\left(1-\frac{1}{\sqrt{2}}\right).$$
Solving for $p$, we get $p = \frac{1}{2}$. This leads to an approximation ratio of at most $\frac{2+\sqrt{2}}{4}$, as promised.
\end{proof}

\section{$n-1$ Lower Bound for Deterministic Mechanisms with Single-Crossing SOS Valuations.}
\label{sec:dc-single-param}
We show that for downward-closed environments, even if valuations satisfy a single-crossing condition and are SOS, any deterministic mechanism cannot obtain a better approximation to the optimal welfare than $n-1$.

\begin{theorem}\label{thm:dc-sc-lb}
	There exists a downward-closed environment with valuations that satisfy single-crossing for which no deterministic mechanism more than a $n-1$ fraction of the optimal welfare.
\end{theorem}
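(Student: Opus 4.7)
The plan is to build a small single-parameter instance in which a minimal downward-closed structure, combined with the monotonicity characterization of Proposition~\ref{prop:monotone}, forces any deterministic ex-post IC mechanism to miss a large aggregate option at the high signal profile. Concretely, I would take $n$ agents in which only agent~$1$ holds a non-trivial signal $s_1 \in \{0,1\}$, and choose
\[\mathcal{I} = \{\{1\}\} \cup 2^{\{2,\ldots,n\}},\]
which is downward closed. For valuations I would set $v_1(0) = 1$, $v_1(1) = 1 + H + \varepsilon$, and $v_i(0) = 0$, $v_i(1) = H$ for every $i \geq 2$, with $H$ arbitrarily large and a small $\varepsilon > 0$.

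Verifying the two structural hypotheses is immediate. Single-crossing holds because $v_1(1) - v_1(0) = H + \varepsilon$ strictly exceeds $v_i(1) - v_i(0) = H$ for every $i \geq 2$, and no other agent has a signal so no further comparison is needed. SOS is automatic since each valuation depends on a single signal. The social optimum is $v_1(0) = 1$ at $s_1 = 0$, witnessed by $\{1\} \in \mathcal{I}$, and $(n-1)H$ at $s_1 = 1$, witnessed by $\{2,\ldots,n\} \in 2^{\{2,\ldots,n\}}$.

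The core step would then be a dichotomy on the mechanism's behavior at $s_1 = 0$. If it does not serve agent~$1$, then since every other agent has value $0$ at $s_1 = 0$ the mechanism's welfare is $0$ against an optimum of $1$, yielding an unbounded ratio. If it does serve agent~$1$, monotonicity of $x_1$ in $s_1$ (Proposition~\ref{prop:monotone}) forces $x_1(1) = 1$, and since $\{1\}$ is the only feasible set containing agent~$1$, the served set at $s_1 = 1$ must be exactly $\{1\}$, producing welfare $1 + H + \varepsilon$ against optimum $(n-1)H$. Driving $H \to \infty$ pushes the ratio to $n-1$, and taking the worse of the two cases yields the claimed bound.

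The main obstacle is not the argument itself but the instance design: one needs a feasibility system that renders the large aggregate option incompatible with the single agent whose low-signal allocation monotonicity constrains, together with single-crossing-consistent valuations that make the aggregate optimal only at the high signal. Once those two design choices are fixed, the rest of the proof is a short verification of single-crossing, SOS, and the two-case welfare comparison.
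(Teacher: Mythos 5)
Your proposal is correct and follows essentially the same approach as the paper: the identical feasibility structure $\mathcal{I}=\{\{1\}\}\cup 2^{\{2,\ldots,n\}}$, the same valuation profile (the paper sets $v_1(1)=1+H$, which already satisfies the paper's weak single-crossing condition with equality, so your $\varepsilon$ is a harmless tightening), and the same monotonicity-driven dichotomy at $s_1=0$. The two proofs match in every substantive step.
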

\begin{proof}
	Consider a set of $n$ bidders, where $\mathcal{I}=\{1\}\cup P(\{2,\ldots, n\})$, where $P(\{2,\ldots, n\})$ is the power set of the set $\{2,\ldots, n\}$. Only agent~1 has a signal $s_1\in \{0,1\}$, and other players do not have signals. The valuations are:
	\begin{eqnarray*}
		v_1(0)=1 &\quad &v_1(1)=1+H\\
		v_i(0)=0 &\quad &v_i(1)=H\quad\quad \quad \forall i\in \{2,\ldots, n\}
	\end{eqnarray*}
	for an arbitrary large value $H\gg1$. Once can easily verify these valuations satisfy single-crossing and SOS.
	
	Any deterministic mechanism that wants to get any approximation to the social welfare must allocate to agent~1 when $s_1=0$. In addition, if a deterministic mechanism wants to get a better approximation than $n-1$ to the optimal social welfare, agent~1 cannot be allocated when $s_1=1$. Otherwise, none of the bidders in $\{2,\ldots n\}$ can get allocated because the only set in $\mathcal{I}$ that contains agent~1 is the singleton set. Therefore, if agent~1 is allocated at $s_1=1$, the achieved welfare is $1+H$, whereas the optimal welfare is $(n-1)\cdot H$ (when serving all agents in $\{2,\ldots, n\}$). For an arbitrary large $H$ This ratio approaches $n-1$.
	
	The proof follows since serving agent~1 at $s_1=0$ and not serving agent~1 at $s_1=1$ is violates monotonicity.
\end{proof}

\begin{remark}
	The $n-1$ factor is tight for single-crossing valuations. If $[n]\in\mathcal{I}$, then the mechanism can always allocate all agents. Otherwise, one can always allocate only to the highest valued agent, which is monotone because of single crossing. Since the largest feasible set is of size at most $n-1$ in this case, allocating to the highest valued agent yields an approximation ratio of $n-1$. 

\end{remark}
\section{Results for $d$-SOS} \label{sec:d-sm}
We now extend the results in Section~\ref{sec:comb-auction-single-dimension} to the case of combinatorial $d$-SOS and combinatorial $d$-strong-SOS valuations with single-dimensional signals. We first note that if we consider $d$-SOS valuations, then Equation~\eqref{eq:decomp-sm} in the decomposition becomes
\begin{eqnarray}
W^* &  \leq & \sum_{i}v_{iT_i^*}(\mathbf{s}_{-i},0_i)+\sum_{i\ :\  s_i>0}d\cdot \left(v_{iT_i^*}(\mathbf{0}_{-i}, s_i) -v_{iT_i^*}(\mathbf{0})\right) \nonumber \\ 
& \leq & \underbrace{\sum_i v_{iT_i^*}(\mathbf{s}_{-i},0_i)}_{\text{$\mathsf{OTHER}$}}+\underbrace{\sum_{\ell=1}^{k-1}\sum_{i\ :\  s_i=\ell}d\cdot v_{iT_i^*}(\mathbf{0}_{-i}, s_i)}_{\text{$\mathsf{SELF}$}},\label{eq:decomposition-d-sm}
\end{eqnarray}

We now show the extension of Theorem~\ref{thm:k-sig-sm} to $d$-SOS valuations.

\begin{theorem}	
	For every combinatorial auction with $d$-SOS valuations over single-dimensional signals, and signal space of size $k$, i.e., $s_i \in \{0, 1,\ldots, k-1\} \, \forall i$, there exists a truthful mechanism that gives $d(k+1)+2$-approximation to the optimal social welfare. \label{thm:kd_sm}
\end{theorem}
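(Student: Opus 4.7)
The plan is to reuse the mechanism \texttt{$k$-HL} from Section~\ref{sec:k-sig-sm} essentially verbatim, adjusting only the mixing probability $p_{RT}$ to reflect the extra slack coming from $d$-SOS. Ex-post IC-IR of both \texttt{Random Threshold} and \texttt{Random Sampling} is established purely from the monotonicity of the valuation functions in signals and the fact that potential winners in \texttt{Random Sampling} have no influence on their own allocation; neither argument uses $d=1$, so the same proof as in Theorem~\ref{thm:k-sig-sm} gives ex-post IC-IR for free. Thus the entire work lies in redoing the welfare analysis using the $d$-SOS versions of the two key ingredients.

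First I would invoke the decomposition~\eqref{eq:decomposition-d-sm}, which already bakes the factor $d$ into the $\mathsf{SELF}$ term and is obtained by applying the $d$-SOS inequality in place of the SOS inequality in~\eqref{eq:decomp-sm}. The analysis of \texttt{Random Threshold} in Lemma~\ref{lem:self_cover} is insensitive to $d$ on its own (it just compares $\bar v_{iT_i^*}$ to $v_{iT_i^*}(\bzero_{-i},s_i)$ via monotonicity, then averages over $\ell$), so it delivers expected welfare at least $\frac{1}{k-1}\sum_{\ell}\sum_{i:s_i=\ell}v_{iT_i^*}(\bzero_{-i},s_i)=\frac{\mathsf{SELF}}{d(k-1)}$ under the new definition of $\mathsf{SELF}$. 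For \texttt{Random Sampling}, I would rerun the computation in Lemma~\ref{lem:other_cover} but this time apply the Key Lemma~\ref{lem:rs-value} with the general parameter $d$: for every agent $i$ and bundle $T$,
\[
\E_{A,B}[\tilde v_{iT}]\ \ge\ \Pr[i\in B]\cdot\E_{A,B}[\tilde v_{iT}\mid i\in B]\ \ge\ \frac{1}{2(d+1)}v_{iT}(\bs_{-i},0_i),
\]
which, summed over the optimal allocation $T^*$ and combined with the fact that the mechanism picks the maximum-weight allocation under the $\tilde v_{iT}$'s, gives a $2(d+1)$-approximation to $\mathsf{OTHER}$.

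To finish, I would choose $p_{RT}=\frac{d(k-1)}{d(k+1)+2}$, so that the two per-mechanism ratios contribute the same denominator:
\[
p_{RT}\cdot\frac{\mathsf{SELF}}{d(k-1)}+(1-p_{RT})\cdot\frac{\mathsf{OTHER}}{2(d+1)}\ =\ \frac{\mathsf{SELF}+\mathsf{OTHER}}{d(k+1)+2}\ \ge\ \frac{W^*(\bs)}{d(k+1)+2},
\]
where the last inequality is~\eqref{eq:decomposition-d-sm}. Substituting $d=1$ recovers the constant $k+3$ of Theorem~\ref{thm:k-sig-sm}, which is a useful sanity check on the choice of $p_{RT}$. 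There is no serious obstacle here: the only care point is keeping the $d$ factor in $\mathsf{SELF}$ straight when it propagates from the $d$-SOS step~\eqref{eq:decomp-sm} through the \texttt{Random Threshold} bound, and then solving the simple two-term LP that balances the two sub-approximations.
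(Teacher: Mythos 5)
Your proposal is correct and follows essentially the same route as the paper's proof: reuse \texttt{$k$-HL}, invoke the $d$-SOS decomposition~\eqref{eq:decomposition-d-sm}, observe that \texttt{Random Threshold} gives a $d(k-1)$-approximation to $\mathsf{SELF}$ and \texttt{Random Sampling} a $2(d+1)$-approximation to $\mathsf{OTHER}$ via Lemma~\ref{lem:rs-value}, and balance with $p_{RT}=\frac{d(k-1)}{d(k+1)+2}$. You even spell out the balancing algebra and the $d=1$ sanity check more explicitly than the paper does.
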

\begin{proof}
	The mechanism is identical to \texttt{$k$-HL}, but runs \texttt{(Random Threshold)} with probability $p_{RT}=\frac{(k-1)d}{d(k+1)+2}$ and \texttt{(Random Sampling)} With probability $1-p_{RT}$.
	The mechanism was already proved to be truthful in Section~\ref{sec:k-sig-sm}. 
	
	\texttt{Random Threshold} now gives a $d(k-1)$-approximation to the new \textsf{SELF} term. The proof is the same as of Lemma~\ref{lem:self_cover}, but the extra factor of $d$ comes from the fact the the new \textsf{SELF} term is $d$ times larger.
	
	\texttt{Random Sampling} gives a $2(d+1)$-approximation to the \text{OTHER} term. While this term is the same for $d$-SOS, the new factor is due to the fact that when applying Lemma~\ref{lem:rs-value} in the proof of Lemma~\ref{lem:other_cover}, we get that $\E_{A,B}[\tilde{v}_{iT}]\geq \frac{1}{2(d+1)}v_{iT}(\mathbf{s}_{-i},0_i)$ instead of the bound we get in Equation~\eqref{eq:other_cover}. 
	
	The new approximation guarantee follows from the new decomposition, the new approximation guarantees the various mechanisms get for the terms of the decomposition, and the updated probability $p_{RT}$. 
\end{proof}

We next extend Theorem~\ref{thm:k-sig-ssm}.

\begin{theorem}	\label{thm:kd-sig-ssm}
	For every combinatorial auction with  $d$-strong-SOS valuations over single-dimensional signals, and signal space of size $k$, i.e., $s_i \in \{0, 1,\ldots, k-1\} \, \forall i$, there exists a  truthful mechanism that gives $(d(d+1)\log_2{k}+2(d+1))$-approximation to the optimal social welfare.
\end{theorem}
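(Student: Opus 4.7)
The plan is to mirror the proof of Theorem~\ref{thm:k-sig-ssm}, running the same two sub-mechanisms \texttt{Random Bucket} and \texttt{Random Sampling}, but tuning the mixing probability to account for the weaker $d$-strong-SOS hypothesis. Since ex-post IC-IR of each sub-mechanism depends only on the threshold/random-sampling structure and not on the strength of the submodularity assumption, the universal ex-post IC-IR argument carries over verbatim; the new work is entirely in the approximation bound.

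First, I would use the $d$-SOS decomposition in Equation~\eqref{eq:decomposition-d-sm}, which shows $W^*(\bs) \leq \mathsf{OTHER} + \mathsf{SELF}$ where the $\mathsf{SELF}$ term now carries an extra factor of $d$. For the $\mathsf{OTHER}$ component, I would apply Lemma~\ref{lem:other_cover} in its $d$-SOS form (exactly as invoked in the proof of Theorem~\ref{thm:kd_sm}): applying the Key Lemma~\ref{lem:rs-value} with parameter $d$ yields $\E_{A,B}[\tilde v_{iT}] \geq \frac{1}{2(d+1)} v_{iT}(\bs_{-i},0_i)$, which gives a $2(d+1)$-approximation to $\mathsf{OTHER}$.

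The main new step is to generalize the bucketing inequality in Equation~\eqref{eq:selfbound_strong_submod} from strong-SOS to $d$-strong-SOS. Applying the $d$-strong-SOS definition coordinate by coordinate with $\bs' = \bs = \mathbf{0}_{-i}$, $s'_i = 0$, $s_i = 2^{\ell-1}$, $\delta = 2^{\ell-1}$, and using the nonnegativity $v_{iT_i^*}(\mathbf{0}) \geq 0$, I obtain $v_{iT_i^*}(\mathbf{0}_{-i}, 2^{\ell}) \leq (d+1)\, v_{iT_i^*}(\mathbf{0}_{-i}, 2^{\ell-1})$, and hence for any $s_i \in [2^{\ell-1}, 2^{\ell})$,
\[
v_{iT_i^*}(\mathbf{0}_{-i}, s_i) \leq (d+1)\, v_{iT_i^*}(\mathbf{0}_{-i}, 2^{\ell-1}).
\]
This is the exact analog of Equation~\eqref{eq:selfbound_strong_submod} with $(d+1)$ in place of the implicit factor $2$. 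Combined with the extra factor $d$ from the $d$-SOS decomposition, $\mathsf{SELF}$ is bounded by $d(d+1)\sum_{\ell=1}^{\log_2 k}\sum_{i\,:\,2^{\ell-1}\le s_i<2^\ell} v_{iT_i^*}(\mathbf{0}_{-i}, 2^{\ell-1})$. Repeating the proof of Lemma~\ref{lem:self_cover_strongly_submod} verbatim (the allocation and payment rule of \texttt{Random Bucket} are unchanged), \texttt{Random Bucket} attains welfare at least $\tfrac{1}{\log_2 k}\sum_\ell \sum_{i\,:\,2^{\ell-1}\le s_i<2^\ell} v_{iT_i^*}(\mathbf{0}_{-i}, 2^{\ell-1})$, which is at least $\mathsf{SELF}/(d(d+1)\log_2 k)$.

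Finally, I would set $p_{RB} = \frac{d(d+1)\log_2 k}{d(d+1)\log_2 k + 2(d+1)}$ so that the two guarantees combine cleanly:
\[
p_{RB}\cdot\frac{\mathsf{SELF}}{d(d+1)\log_2 k} + (1-p_{RB})\cdot\frac{\mathsf{OTHER}}{2(d+1)}
= \frac{\mathsf{SELF}+\mathsf{OTHER}}{d(d+1)\log_2 k + 2(d+1)} \geq \frac{W^*(\bs)}{d(d+1)\log_2 k + 2(d+1)}.
\]
The only nontrivial step is deriving the $(d+1)$-multiplicative bucketing inequality from the $d$-strong-SOS definition; the rest is a transparent reweighting of the two lemmas that already underlie Theorems~\ref{thm:k-sig-ssm} and~\ref{thm:kd_sm}.
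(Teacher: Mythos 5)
Your proposal is correct and follows essentially the same route as the paper: the same mechanism (\texttt{Random Bucket} mixed with \texttt{Random Sampling}), the same $d$-SOS decomposition (Equation~\eqref{eq:decomposition-d-sm}), the same $d$-strong-SOS bucketing bound (Equation~\eqref{eq:selfbound_d_strong_submod}), the $2(d+1)$ bound on \textsf{OTHER} from the Key Lemma, and the same mixing probability (your $p_{RB}$ reduces to the paper's $\frac{d\log_2 k}{d\log_2 k + 2}$ after cancelling the common factor $d+1$). The only genuine addition is that you spell out the derivation of the $(d+1)$-multiplicative bucketing inequality from Definition~\ref{def:ssm}, which the paper leaves implicit.
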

\begin{proof}
	The mechanism is identical to mechanism \texttt{$k$-SS} from Section~\ref{sec:k-sig-ssm}, but runs \texttt{Random Bucket} with probability $p_{RB}=\frac{d\log_2{k}}{d\log_2{k}+2}$ and \texttt{(Random Sampling)} With probability $1-p_{RB}$.
	
	The \textsf{SELF} term from Equation~\eqref{eq:selfbound_strong_submod} is now bounded via the following:
	\begin{eqnarray}
	\mathsf{SELF} &=&	\sum_{\ell=1}^{k-1}\sum_{i\ :\  s_i=\ell}d\cdot v_{iT_i^*}(\mathbf{0}_{-i}, s_i)\nonumber\\
	& = & \sum_{\ell=1}^{\log_2 k}\sum_{i\ :\ 2^{\ell-1}\leq s_i < 2^\ell} d\cdot v_{iT_i^*}(\mathbf{0}_{-i}, s_i)\nonumber\\
	& \leq & \sum_{\ell=1}^{\log_2 k}\sum_{i\ :\ 2^{\ell-1}\leq s_i < 2^\ell} d(d+1)\cdot v_{iT_i^*}(\mathbf{0}_{-i}, {2^{\ell-1}}_i), \label{eq:selfbound_d_strong_submod}
	\end{eqnarray}
	where the inequality follows the definition of $d$-strong-SOS valuations.
	
	The new bound changes the guarantee of \texttt{Random Bucket} to get a $d(d+1)\log_2{k}$-approximation to the \textsf{SELF} term, where the proof is identical to that of Lemma~\ref{lem:self_cover_strongly_submod}.
	
	As stated in Theorem~\ref{thm:kd_sm}, \texttt{Random Sampling} approximates the \textsf{OTHER} term to a factor $2(d+1)$. The proof of the new bound follows the new decomposition, the updated probabilities and the new approximation guarantees of the mechanisms being run.
\end{proof}

\end{document}